\newtheorem{theorem}{Theorem}[section]
\newtheorem{lemma}[theorem]{Lemma}
\newtheorem{remark}{Remark}[section]
\newenvironment{proof}{\noindent{\bf Proof}\,\,}{$\hfill\blacksquare$\\}
\begin{document}

\title{Identification of FIR Systems with Binary Input and Output Observations} 

\author{Alex S. Leong, Erik Weyer, and Girish N. Nair
 \thanks{A. Leong is with the Department of Electrical Engineering (EIM-E), Paderborn University, 33098 Paderborn, Germany. E-mail: {\tt alex.leong@upb.de.} E. Weyer and G. Nair are with the
  Department of Electrical and Electronic Engineering,
  University of Melbourne, Parkville, Vic. 3010, Australia. E-mail: {\tt \{ewey,gnair\}@unimelb.edu.au.}}
\thanks{This work was supported by the Australian Research Council under grants DP120101122 and DE120102012.}
  }

\maketitle

\begin{abstract}
This paper considers the identification of FIR systems, where information about the inputs and outputs of the system undergoes quantization into binary values before transmission to the estimator. In the case where the thresholds of the input and output quantizers can be adapted, but the quantizers have no  computation and storage capabilities, we propose identification schemes which are strongly consistent for Gaussian distributed inputs and noises. This is based on exploiting the correlations between the quantized input and output observations to derive nonlinear equations that the true system parameters must satisfy, and then estimating the parameters by solving these equations using stochastic approximation techniques. If, in addition, the input and output quantizers have computational and storage capabilities, strongly consistent identification schemes are proposed which can handle arbitrary input and noise distributions. In this case, some conditional expectation terms are computed at the quantizers, which can then be estimated based on binary data transmitted by the quantizers, subsequently allowing the parameters to be identified by solving a set of linear equations. The algorithms and their properties are illustrated in simulation examples.
\end{abstract}

\section{Introduction}
It  is nowadays  common to transmit data using digital communication techniques rather than analog communications, due to advantages such as better noise tolerance and the possibility of doing error control coding on the data  \cite{Proakis}. In digital communications,  analog valued data is required to be quantized into a digital form (e.g. bit strings of 0s and 1s) before transmission.  For applications such as large-scale production plants and environmental monitoring, the sensors must transmit their measurements over a communication network to a distant monitoring station. Unlike consumer internet and telephony, these networks must often satisfy severe limitations on transmission power and bandwidth, for reasons of cost and energy efficiency \cite{Heinzelman_HICSS,Akyildiz}. This thus limits the resolution in bits of the
transmitted measurements and degrades the quality of the models and relationships constructed from the received data.
In this paper we  consider  identification of FIR systems where information about the inputs and outputs of the system are quantized to a single bit (i.e. binary data) at each discrete time instant before transmission to the estimator.

System identification using quantized observations has been previously studied. The case where only the system outputs are quantized has been considered in e.g. \cite{Wang_book,AgueroGoodwinYuz,WeyerKoCampi,GodoyGoodwin,CasiniGarulliVicino_CDC,You_adaptive} by using  multi-level quantizers, and  \cite{Wang_book,WangZhangYin,ColinetJuillard,CasiniGarulliVicino,CsajiWeyer_CDC,CsajiWeyer_sysid,GoudjilPouliquen,PouliquenGoudjil}
by using 1-bit quantizers. Different aspects such as asymptotic properties of estimators, design of input signals for identification, and design of quantizers and threshold selection have been investigated, with various  assumptions made on the type of system and level of knowledge of the noise distributions.

 In this paper we do not assume that the input signal can be designed, but we  assume that we have quantized measurements of it.
Such a situation is encountered in many areas, e.g. process industries, ecology, environmental sciences, and economics  \cite[p.409]{Ljung_book} where one cannot or should not interfere with the system (or certain parts of the system). In such cases the input signal is measured  rather than specifically designed.  This is also different from the setting in blind system identification as studied in the signal processing and communication literature, where the system parameters are identified (up to a multiplicative constant) based on only statistical information about the input signal in addition to measurements of the output \cite{AbedMeraimQiuHua,YuZhangXie}. 

When both the inputs and outputs are quantized with multi-level quantizers, approaches using  instrumental variables methods  were proposed in \cite{SuzukiSugie,Ikenoue}, but the analysis relies on the validity of high rate quantization assumptions \cite{GrayNeuhoff} and no proof of consistency was provided. The problem of finding an optimal fixed order FIR approximation from quantized input and output data was studied in \cite{CeronePigaRegruto}. 
In the case where both inputs and outputs are quantized to 1-bit, \cite{Krishnamurthy_quantized} studied the identification of a  dynamic shock error model by counting patterns of zeros and ones \cite{Kedem}, which can give consistent estimates for known noise distributions, but requires knowledge of the power ratio between the input and output signals. The identification of FIR systems where output observations are quantized, and the input signal is constrained to take on a finite number of possible values, was studied in \cite{GuoWang_FIR}. The identification of first order gain systems with binary input and output observations was investigated in \cite{YouWeyerNair} in the case where the noise and inputs were assumed to be Gaussian, and for  symmetrically distributed (about its mean) inputs and noises in \cite{LianLuo}. Identification schemes based on empirical measures and the EM algorithm were presented in \cite{YouWeyerNair}, and schemes based on stochastic approximation in \cite{LianLuo}.  However, for higher order FIR systems with input and output observations quantized to a single bit, no consistent identification schemes currently exist. Other related work include identification of Wiener \cite{Chen_Wiener_discontinuous,MuChen_Wiener}, Hammerstein \cite{ZhaoChen_Hammerstein,MuChenWangYinZheng}, and nonlinear ARX \cite{ZhaoZhengBai,ZhaoChenTempoDabbene} systems, but the inputs are  assumed to be perfectly known in these works, and only \cite{ZhaoChenTempoDabbene} explicitly considers quantized outputs.

In this paper we extend the setup considered in \cite{YouWeyerNair} and \cite{LianLuo} to FIR systems. Their proposed methods however do not  generalize in a straightforward manner to higher order systems, thus alternative identification schemes are devised. 
The main contributions of the paper are:
\begin{itemize}
\item We consider identification of FIR systems where \emph{both} the input and output observations are quantized to 1-bit.  The input signal is not designed, and we only have information about the realization of the signal from the quantized measurements.
\item In the case where the thresholds of the input and output quantizers can be dynamically adjusted, but the quantizers have no  computation and storage capabilities,  we propose identification schemes which are strongly consistent for i.i.d. Gaussian distributed inputs and noises. 
\item  If, in addition, the input and output quantizers have computational and storage capabilities, we devise strongly consistent identification schemes for arbitrary i.i.d. input and noise distributions. 
\end{itemize}

The paper is organized as follows. Section \ref{dumb_quantizer_sec} considers identification of FIR systems for quantizers without computational capabilities and Gaussian distributed  inputs and noise. We present identification schemes  when either the parameters of the Gaussian distributed input are known (Section \ref{dumb_quantizer_scheme_sec}) or unknown (Section \ref{dumb_quantizer_unknown_input_sec}), together with proofs of strong consistency of the schemes. The idea is based on exploiting the correlations between the quantized input and output observations to derive nonlinear equations that the parameters must satisfy.  The parameters are then estimated by solving these nonlinear equations using stochastic approximation techniques. Section \ref{smart_quantizer_sec} considers the case of quantizers with computational and storage capabilities, and arbitrary input and noise distributions. Identification schemes are presented when either the input distribution is known (Section \ref{smart_quantizer_scheme_sec}) or unknown (Section \ref{smart_quantizer_unknown_input_sec}), together with proofs of strong consistency. The idea  is now to compute certain conditional expectation terms at the quantizers. These conditional expectations  can  be estimated  based on binary data transmitted by the quantizers, which then allows the parameters to be identified by solving a set of linear equations. A preliminary version of the results in this paper (without convergence proofs) can be found in \cite{LeongWeyerNair}. 

\section{Quantizers Without Computational Capabilities}
\label{dumb_quantizer_sec}
 
\subsection{Data Generating System and Model}
\label{dumb_quantizer_model_sec}
The system to be identified is an $N$-th order FIR system
\begin{equation*}
y_t = b_1 u_{t-1} + b_2 u_{t-2} + \dots + b_N u_{t-N} + w_t
\end{equation*}
where $\{u_t\}$ are the inputs, $\{y_t\}$ the outputs, $\{w_t\}$ the noise, and $b_1,\dots,b_N$ are the parameters to be identified. 
There are quantizers at the inputs $\{u_t\}$ and outputs $\{y_t\}$, which transmit 1-bit (binary) quantized information to the  estimator, see Fig. \ref{system_model}. 
\begin{figure}[t!]
\centering 
\includegraphics[scale=0.4]{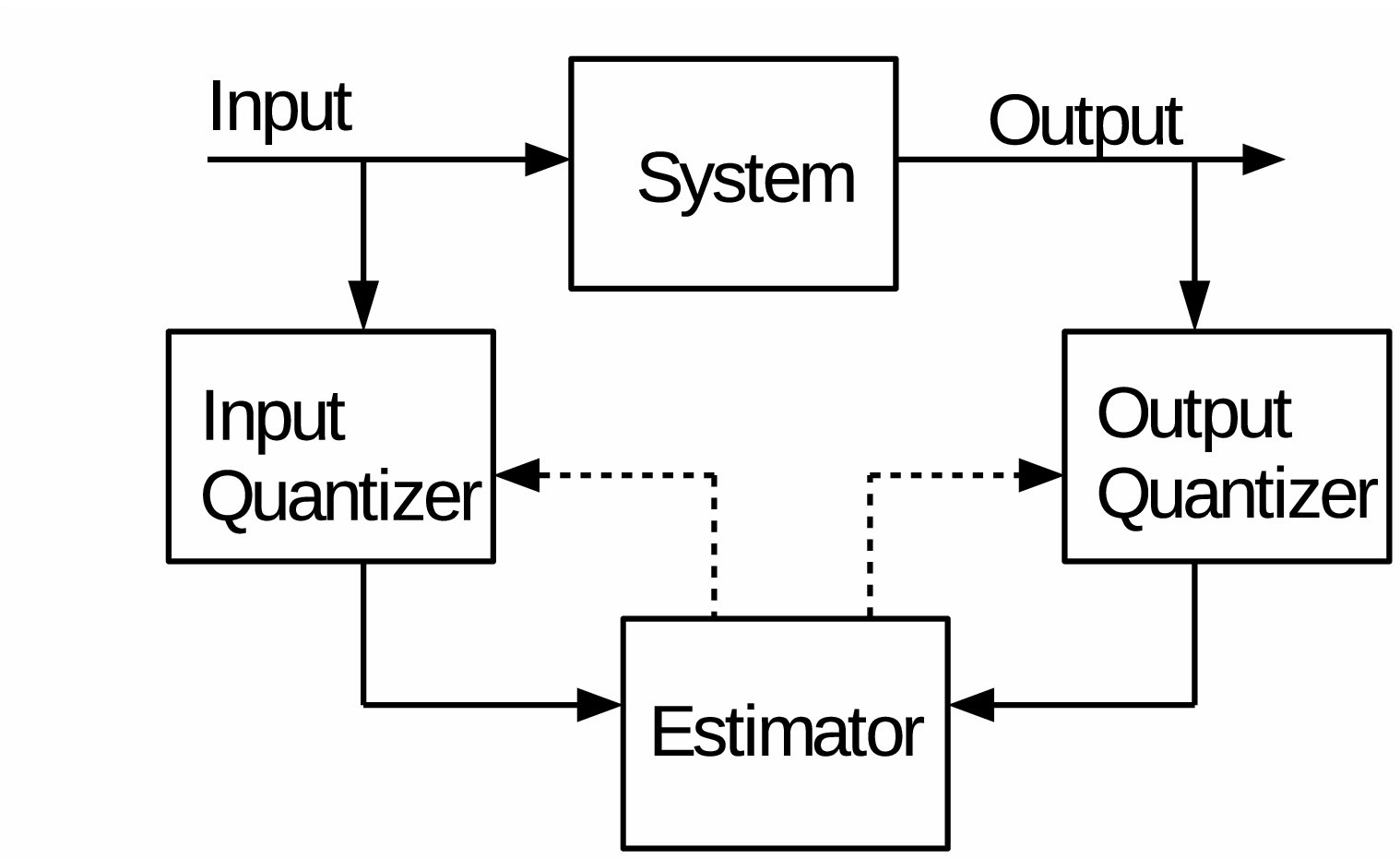} 
\caption{System Model}
\label{system_model}
\end{figure} 
The estimator can also transmit information back to the input and output quantizers, e.g. it can tell the quantizers  to  adjust their thresholds. However, in this section the quantizers will be assumed to have no additional computation or storage capabilities. At time $t$, the estimator will receive the measurements $\mathbf{z}_t = (\mathds{1}(u_{t} >c_u), \mathds{1} (y_t > c_y))$, where $\mathds{1}(.)$ is the indicator function, and $c_u$ and $c_y$ are the input and output quantizer thresholds respectively. 
In this paper we are primarily interested in FIR systems of order $N \geq 2$, as the case of $N=1$ with binary input and output observations has been previously studied in \cite{YouWeyerNair} and \cite{LianLuo}. 

We make the following assumptions:
\\ \emph{Assumption 1}: The input sequence $\{u_t\}$ is i.i.d. Gaussian with mean $\mu$ and variance $\sigma_u^2$.
\\ \emph{Assumption 2}: The noise sequence $\{w_t\}$ is i.i.d. Gaussian and independent of $\{u_t\}$, with zero mean and variance $\sigma_w^2$. 
\\ \emph{Assumption 3}: The model order $N$ is known.

We will assume that $\mu$ and  $\sigma_u^2$  are known to the estimator in Section \ref{dumb_quantizer_scheme_sec}, but unknown in Section \ref{dumb_quantizer_unknown_input_sec}. It will turn out that knowledge of the noise variance $\sigma_w^2$ is not needed in the identification schemes. In Section \ref{smart_quantizer_sec}, the Gaussian assumptions stated in Assumptions 1 and 2 will be removed.

\subsection{Identification Scheme for Known Input Distribution}
\label{dumb_quantizer_scheme_sec}
In this subsection we will also make the following assumption:
\\ \emph{Assumption 4}: The input parameters $\mu$ and  $\sigma_u^2$  are known to the estimator. 

We will first describe the intuition behind the identification scheme, before presenting it formally in Algorithm 1. We will then give a proof of the strong consistency of the identification scheme. 

The basic idea is to consider the correlations between the quantized input and output observations. Specifically, we look at the product $\mathds{1}(u_{t-n} >c_u) \mathds{1} (y_t > c_y)$ for  $n=1,\dots,N$. Taking the empirical mean, we have by the ergodic theorem (see e.g. p. 393 of \cite{GrimmettStirzaker}) that as $T \rightarrow \infty$,
\begin{equation}
\label{empirical_corr}
\begin{split}
&\frac{1}{T} \sum_{t=1}^T \mathds{1}(u_{t-n} > c_u) \mathds{1} (y_t > c_y)  \stackrel{\textrm{a.s.}}{\rightarrow} \mathbb{P} (u_{t-n} > c_u, y_t > c_y) \\
 & = \mathbb{P} (u_{t-n} > c_u, b_1 u_{t-1}  + \dots + b_N u_{t-N} + w_t > c_y) \\
 & = \mathbb{E} \left[ \mathbb{P} (u_{t-n} > c_u, b_1 u_{t-1}  + \dots + b_N u_{t-N} + w_t > c_y | u_{t-n}) \right] \\
 & = \!\int_{c_u}^\infty \! \Bigg[1 \!-\! \Phi \Bigg( \frac{c_y \!-\! b_n u_{t-n} \!-\! \sum_{m=1,m\neq n}^N b_m \mu}{\sqrt{\sum_{m=1,m\neq n}^N b_m^2 \sigma_u^2 + \sigma_w^2}}\Bigg) \Bigg] p(u_{t-n}) du_{t-n}  
\end{split}
\end{equation}
where $\Phi(x) = \int_{-\infty}^x \frac{1}{\sqrt{2\pi}} e^{-t^2/2} dt$ is the cumulative distribution function (cdf) of a $\mathcal{N}(0,1)$ random variable, and
\begin{equation}
\label{pu_defn}
p(u) \triangleq \frac{1}{\sqrt{2 \pi \sigma_u^2}} \exp \left(- \frac{(u-\mu)^2}{2 \sigma_u^2} \right)
\end{equation}
is the probability density function (pdf) of a $\mathcal{N}(\mu,\sigma_u^2)$ random variable.  
The last line of (\ref{empirical_corr}) holds since given $u_{t-n}$, $ b_1 u_{t-1}  +\dots + b_N u_{t-N} + w_t$ is Gaussian with mean $b_n u_{t-n} + (b_1 + \dots + b_{n-1}  + b_{n+1} + \dots + b_N) \mu$ and variance $(b_1^2 + \dots + b_{n-1}^2 + b_{n+1}^2 + \dots + b_N^2) \sigma_u^2 + \sigma_w^2$. 
Let $y$ be a random variable with the same stationary distribution as $y_t$.
Substituting the expressions $\mathbb{E}[y] = (b_1 + \dots + b_N) \mu$ and $\textrm{Var}[y] = (b_1^2 + \dots+ b_N^2) \sigma_u^2 + \sigma_w^2$ into (\ref{empirical_corr}) gives 
\begin{equation}
\label{empirical_prob_expr}
\begin{split}
&\frac{1}{T} \sum_{t=1}^T \mathds{1}(u_{t-n} > c_u) \mathds{1} (y_t > c_y) \\ & \stackrel{\textrm{a.s.}}{\rightarrow} \int_{c_u}^\infty \Bigg[1 - \Phi \Bigg( \frac{c_y - b_n u - \mathbb{E}[y] + b_n \mu}{\sqrt{\textrm{Var}[y] - b_n^2 \sigma_u^2}}\Bigg) \Bigg] p(u) du   
\end{split}
\end{equation}
The idea is now to estimate $\mathbb{E}[y]$ and $\textrm{Var}[y]$, and to substitute these estimates in the equations above and solve with respect to $b_n$.

The identification scheme is divided into odd and even time slots.\footnote{Devoting half the resources to estimating the mean and half  to estimating the variance is an intuitively reasonable choice. Whether there is a different proportion that gives ``optimal'' performance will however require further investigation.} During the odd time slots $t=2j-1, j=1,2,\dots$, we estimate $\mathbb{E}[y]$,  by using the stochastic approximation (\!\!\!\cite{KushnerYin, Chen_SA}) procedure
\begin{equation}
\label{cy_recursion}
c_{y,j+1} = c_{y,j} + \alpha_j \left( \mathds{1} (y_{2j-1} > c_{y,j}) -0.5 \right),
\end{equation}
  where $\{\alpha_j\}$ is a  sequence satisfying $\alpha_j > 0$, $\sum_{j=0}^\infty \alpha_j = \infty$,  and $\sum_{j=0}^\infty \alpha_j^2 < \infty$. 
  The procedure tries to find a $c_y$ such that 
$\mathbb{P}(y_t > c_y) = 0.5$, so that the estimate of the mean is $\widehat{\textrm{E}y} = c_y$, since the probability that a random variable is larger than its mean is $0.5$ for any symmetric distribution with a continuous pdf such as the Gaussian. To see that (\ref{cy_recursion}) is a stochastic approximation procedure, write
\begin{align*} 
& \mathds{1} (y_{2j-1} > c_{y,j}) -0.5 \\ &=\mathbb{P} (y_t > c_{y,j}) -0.5 + \mathds{1} (y_{2j-1} > c_{y,j}) - \mathbb{P} (y_t > c_{y,j}).
\end{align*}
Thus $\mathds{1} (y_{2j-1} > c_{y,j}) -0.5$ can be regarded as a ``noisy''  observation (with noise term $\mathds{1} (y_{2j-1} > c_{y,j}) - \mathbb{P} (y_t > c_{y,j})$) of the function $\mathbb{P} (y_t > c_{y,j}) -0.5$, whose root we are trying to find. 

During the even time slots $t=2j, j=1,2,\dots$, we estimate $\textrm{Var}[y] = (b_1^2 + \dots + b_N^2) \sigma_u^2 + \sigma_w^2$, using the stochastic approximation procedure
 $$\tilde{c}_{y,j+1} = \tilde{c}_{y,j} + \alpha_j \left(  \mathds{1}(y_{2j} > \tilde{c}_{y,j}) - 0.1587 \right).$$ 
This procedure tries to find a $\tilde{c}_y$ such that 
$\mathbb{P}(y_t > \tilde{c}_y) = 0.1587$. 
 Since $y_t$ is Gaussian, it follows that $\tilde{c}_y$ will be one standard deviation larger than the mean, since the probability that a Gaussian random variable is more than one standard deviation away from the mean is $1-0.6827 = 0.1587 \times 2 $. Hence an estimate of the variance is $\widehat{\textrm{V}y} =  (\tilde{c}_y - \widehat{\textrm{E}y})^2 = (\tilde{c}_y - c_y)^2.  $

Replacing $\mathbb{E}[Y]$ with $c_y$ and $\textrm{Var}[Y]$ with $\widehat{\textrm{V}y}$ on the right hand side of (\ref{empirical_prob_expr}),  and choosing the threshold\footnote{The choice $c_u=\mu$ gives roughly equal proportions of $0$'s and $1$'s for the random variable $\mathds{1}(u_t > c_u)$, though any other reasonably chosen value for $c_u$ will work. A similar comment applies to the choice of $\tilde{c}_y$ to be one standard deviation larger than the mean. These choices in the algorithm could possibly be tweaked and optimized over, but in this paper we will use intuitively natural values to illustrate the basic principles.} $c_u=\mu$, gives the equations
\begin{equation}
\label{bn_eqn}
\begin{split}
& \frac{1}{T} \sum_{t=1}^T \mathds{1}(u_{t-n} > \mu) \mathds{1} (y_t > c_y) \\ & = \int_{\mu}^\infty \Bigg[1 - \Phi \Bigg( \frac{- b_n(u - \mu)}{\sqrt{\widehat{\textrm{V}y} - b_n^2 \sigma_u^2}}\Bigg) \Bigg] p(u) du, \quad n=1,\dots,N,
\end{split}
\end{equation}
which can be solved with respect to $b_n$, thereby obtaining estimates. 

Note that each of the $N$  equations in (\ref{bn_eqn}) is an equation of one variable, and all the equations are of the same form. 
A question arises as to whether each of the equations in (\ref{bn_eqn}) has a unique solution for $b_n$.
Given $\mu$ and $\sigma_u^2$, define 
\begin{equation}
\label{F_fn}
\begin{split}
&F(b,\widehat{\textrm{V}y}) \\ & \triangleq \left\{ \begin{array}{ll} \int_{\mu}^\infty \Big[1 - \Phi \Big( \frac{- b(u - \mu)}{\sqrt{\widehat{\textrm{V}y} - b^2 \sigma_u^2}}\Big) \Big] p(u) du, & - \sqrt{\frac{\widehat{\textrm{V}y}}{\sigma_u^2}} < b < \sqrt{\frac{\widehat{\textrm{V}y}}{\sigma_u^2}} \\ \int_{\mu}^\infty p(u) du = \frac{1}{2}, & b \geq \sqrt{\frac{\widehat{\textrm{V}y}}{\sigma_u^2}} \\ 0, &  b \leq - \sqrt{\frac{\widehat{\textrm{V}y}}{\sigma_u^2}}. \end{array} \right.
\end{split}
\end{equation}
\begin{lemma}
\label{F_increasing_lemma}
For fixed $\widehat{\textrm{V}y}$, the function $F(b,\widehat{\textrm{V}y})$ defined by (\ref{F_fn}) is strictly monotonically increasing in $b$ for  $ b \in \left(- \sqrt{\widehat{\textrm{V}y}/\sigma_u^2},  \sqrt{\widehat{\textrm{V}y}/\sigma_u^2}\right)$. 
\end{lemma}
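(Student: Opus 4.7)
The plan is to first rewrite the integrand in a cleaner form, then isolate the $b$-dependence into a single simple function whose monotonicity can be checked by direct differentiation. Using the identity $1-\Phi(-x)=\Phi(x)$ and writing $\sigma(b)\triangleq\sqrt{\widehat{\textrm{V}y}-b^2\sigma_u^2}$, the integrand becomes $\Phi\bigl(b(u-\mu)/\sigma(b)\bigr)$, so
$$F(b,\widehat{\textrm{V}y}) = \int_\mu^\infty \Phi\!\left(\frac{b(u-\mu)}{\sigma(b)}\right) p(u)\,du.$$
Since $u-\mu>0$ on the integration domain and $\Phi$ is strictly increasing on $\mathbb{R}$, it suffices to show that $g(b)\triangleq b/\sigma(b)$ is strictly increasing in $b$ on the open interval $\bigl(-\sqrt{\widehat{\textrm{V}y}/\sigma_u^2},\sqrt{\widehat{\textrm{V}y}/\sigma_u^2}\bigr)$.

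Next I would compute $g'(b)$ directly. Using $\sigma'(b)=-b\sigma_u^2/\sigma(b)$, a short calculation gives
$$g'(b) = \frac{\sigma(b) - b\,\sigma'(b)}{\sigma(b)^2} = \frac{\sigma(b)^2 + b^2\sigma_u^2}{\sigma(b)^3} = \frac{\widehat{\textrm{V}y}}{\sigma(b)^3} > 0,$$
on the stated interval, so $g$ is strictly increasing there.

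Finally, I would conclude as follows. For any $b_1<b_2$ in the interval and any fixed $u>\mu$, strict monotonicity of $g$ and of $\Phi$ gives
$$\Phi\!\left(\frac{b_1(u-\mu)}{\sigma(b_1)}\right) < \Phi\!\left(\frac{b_2(u-\mu)}{\sigma(b_2)}\right).$$
Since $p(u)>0$ for all $u$, integrating over $(\mu,\infty)$ preserves the strict inequality, yielding $F(b_1,\widehat{\textrm{V}y}) < F(b_2,\widehat{\textrm{V}y})$, as required.

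There is no real obstacle here; the only subtle point is recognizing that all the awkward $b$-dependence in the integrand collapses, after using $1-\Phi(-x)=\Phi(x)$, into the single scalar function $b/\sigma(b)$, at which point the derivative simplifies to a manifestly positive expression.
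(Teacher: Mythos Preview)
Your proof is correct and follows essentially the same idea as the paper's: both establish that the integrand is strictly increasing in $b$ for every $u>\mu$, and then conclude monotonicity of the integral. The paper does this by applying the Leibniz rule and computing $\partial/\partial b$ of the integrand directly, obtaining the somewhat bulkier expression
\[
\frac{1}{\sqrt{2\pi}}\exp\!\left(-\frac{b^2(u-\mu)^2}{2(\widehat{\textrm{V}y}-b^2\sigma_u^2)}\right)\left[\frac{b^2\sigma_u^2(u-\mu)}{(\widehat{\textrm{V}y}-b^2\sigma_u^2)^{3/2}}+\frac{u-\mu}{(\widehat{\textrm{V}y}-b^2\sigma_u^2)^{1/2}}\right],
\]
which is positive for $u>\mu$. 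Your version is a bit tidier: by first using $1-\Phi(-x)=\Phi(x)$ you collapse the $b$-dependence into the scalar map $g(b)=b/\sigma(b)$, whose derivative simplifies to $\widehat{\textrm{V}y}/\sigma(b)^3>0$, and you then compare integrands pointwise rather than invoking differentiation under the integral sign. Both arguments are equivalent in substance; yours just sidesteps the Leibniz-rule justification and yields a one-line derivative.
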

\begin{proof}
See Appendix \ref{F_increasing_lemma_proof}.
\end{proof}

\begin{remark}
Since
$$ \lim_{b \rightarrow \sqrt{\widehat{\textrm{V}y}/\sigma_u^2}}\int_{\mu}^\infty \Big[1 - \Phi  \Big( \frac{- b(u - \mu)}{\sqrt{\widehat{\textrm{V}y} - b^2 \sigma_u^2}}\Big) \Big] p(u) du  = \int_{\mu}^\infty  p(u) du$$
and $$ \lim_{b \rightarrow - \sqrt{\widehat{\textrm{V}y}/\sigma_u^2}}\int_{\mu}^\infty \Big[1 - \Phi  \Big( \frac{- b(u - \mu)}{\sqrt{\widehat{\textrm{V}y} - b^2 \sigma_u^2}}\Big) \Big] p(u) du  = 0,$$
 $F(b,\widehat{\textrm{V}y})$ is monotonically increasing in $b$ for fixed $\widehat{\textrm{V}y}$, and strictly monotonic on the interval $\left(- \sqrt{\widehat{\textrm{V}y}/\sigma_u^2},  \sqrt{\widehat{\textrm{V}y}/\sigma_u^2}\right)$ as shown in Lemma \ref{F_increasing_lemma}.
\end{remark}

By Lemma \ref{F_increasing_lemma}, the equations (\ref{bn_eqn}) can thus be solved uniquely for $b_n, n=1,\dots,N$ on the interval $\left(- \sqrt{\widehat{\textrm{V}y}/\sigma_u^2},  \sqrt{\widehat{\textrm{V}y}/\sigma_u^2}\right)$. These calculations are also carried out during the odd time slots $t=2j-1, j=1,2,\dots$.

In the proposed scheme, we will not actually solve the nonlinear equations (\ref{bn_eqn}) exactly at every iteration, which is computationally intensive. Instead, since $b_1, \dots, b_N$ are constant, we will update the estimates recursively using  a stochastic approximation approach, namely
\begin{align*}
&\hat{b}_{n,j+1} = \hat{b}_{n,j} + \alpha_j \Bigg( \mathds{1}(u_{2j-1-n} > \mu) \mathds{1}(y_{2j-1} > c_{y,j}) \\ &- \int_{\mu}^\infty \Bigg[1 - \Phi \Bigg( \frac{- \hat{b}_{n,j}(u - \mu)}{\sqrt{\widehat{\textrm{V}y}_j - \hat{b}_{n,j}^2 \sigma_u^2}}\Bigg) \Bigg] p(u) du \Bigg), n=1,\dots,N,
\end{align*}
where $\widehat{\textrm{V}y}_j=(\tilde{c}_{y,j}-c_{y,j})^2$. This approach requires numerical computation of $N$ integrals (one for each $n$) at every iteration, rather than having to solve $N$ nonlinear equations (\ref{bn_eqn}) at every iteration.

In addition,  to ensure boundedness of the iterates and prove the convergence of our scheme, we will also use the idea of expanding truncations for the iterates \cite{Chen_SA}. Let $\{ M_j\}$ be a sequence of positive numbers increasing to infinity. A  recursive procedure with expanding truncations has the form
\begin{equation}
\label{expanding_truncations}
x_{j+1} = \Pi_{M_{\varsigma(j)}} (x_j + \alpha_j z_j)
\end{equation}
where
\begin{equation}
\label{increasing_thresholds}
\varsigma(0) = 0, \quad \varsigma(j) \triangleq \sum_{i=1}^{j-1} \mathds{1}(||x_i + \alpha_i z_i|| > M_{\varsigma(i)}), 
\end{equation}
and the truncation operation
\begin{equation}
\label{truncation_operation}
\Pi_M(x) \triangleq \left\{\begin{array}{ll} x, & ||x|| \leq M \\ x^*, &||x|| > M. \end{array}  \right.
\end{equation}
Thus the procedure (\ref{expanding_truncations})  truncates the iterate $x_{j+1}$ back to $x^*$ when its norm exceeds a threshold $M_{\varsigma(j)}$, with the threshold increasing each time it is exceeded, according to (\ref{increasing_thresholds}). In this paper we will choose $x^* = \mathbf{0}$. 
As $M_j$ goes to infinity, the iterates will eventually almost surely have norm less than $M_j$ for a sufficiently large $M_j$, provided conditions such as those in Theorem 2.4.1 of \cite{Chen_SA} (which we will verify as part of the proof of Theorem \ref{dumb_quantizer_convergence_thm}) are satisfied. 

Now that the intuitive ideas have been presented,  the identification scheme is formally stated as Algorithm 1 below. 

\begin{table*}[b!]
\fbox{\parbox{\textwidth}{
\textbf{Algorithm 1}
\begin{itemize} 
\item Set $c_u = \mu$, and choose a sequence $\{\alpha_j\}$  satisfying $\alpha_j > 0$, $\sum_{j=0}^\infty \alpha_j = \infty$,  and $\sum_{j=0}^\infty \alpha_j^2 < \infty$
\item Initialize $c_{y,1}=0$, $\tilde{c}_{y,1}=1$, $\hat{b}_{n,1}=0, n=1,\dots,N$
\item For $j=1,2,\dots$, compute:
\begin{equation}
\label{alg1_recusions}
\begin{split}
& \left[ \!\!\!\begin{array}{c} c_{y,j+1} \\ \tilde{c}_{y,j+1} \\ \hat{b}_{1,j+1} \\ \vdots \\ \hat{b}_{N,j+1}  \end{array} \!\!\! \right]  = \Pi_{M_{\varsigma(j)}}\left( \left[\!\!\! \begin{array}{c} c_{y,j} \\ \tilde{c}_{y,j} \\ \hat{b}_{1,j} \\ \vdots \\ \hat{b}_{N,j} \end{array} \!\!\! \right] + \alpha_j \left[\!\!\!\begin{array}{c} \mathds{1} (y_{2j-1} > c_{y,j}) - 0.5 \\  \mathds{1} (y_{2j} > \tilde{c}_{y,j}) - 0.1587 \\ \mathds{1}(u_{2j-2} > \mu) \mathds{1}(y_{2j-1} > c_{y,j}) - F(\hat{b}_{1,j},\widehat{\textrm{V}y}_j) \\ \vdots \\  \mathds{1}(u_{2j-1-N} > \mu) \mathds{1}(y_{2j-1} > c_{y,j}) - F(\hat{b}_{N,j},\widehat{\textrm{V}y}_j) \end{array} \!\!\! \right] \right)
\end{split}
\end{equation}
where $\Pi_{M_{\varsigma(j)}}(\cdot)$ is defined by (\ref{expanding_truncations})-(\ref{truncation_operation}), $F(\cdot,\cdot)$ by (\ref{F_fn}),
and $\widehat{\textrm{V}y}_{j} \triangleq (\tilde{c}_{y,j} - c_{y,j})^2$
\end{itemize}}}
\end{table*}

In Algorithm 1, note that  
 the integrals $F(\hat{b}_{n,j},\widehat{\textrm{V}y}_j)$ for $n=1,\dots,N$ can be evaluated by lookup table, by precomputing $\int_{\mu}^\infty \left[1 - \Phi  \left( - x(u - \mu)\right) \right] p(u) du$ for different values of $x$, which can substantially improve the running time of the algorithm. 

We will now prove the strong consistency of Algorithm 1.
\begin{theorem}
\label{dumb_quantizer_convergence_thm}
Under Algorithm 1 and Assumptions $1-4$, $\hat{b}_{n,j}  \stackrel{\textrm{a.s.}}{\rightarrow} b_n$ as $j \rightarrow \infty$ for $n=1,\dots,N$. 
\end{theorem}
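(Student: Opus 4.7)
The plan is to view Algorithm 1 as a stochastic approximation procedure with expanding truncations and invoke a convergence theorem of Chen (Theorem 2.4.1 of \cite{Chen_SA}). The state vector is $x_j = (c_{y,j}, \tilde{c}_{y,j}, \hat{b}_{1,j}, \dots, \hat{b}_{N,j})^\top$. The first two coordinates evolve autonomously (they depend only on the observations $\{y_t\}$, not on the $\hat{b}_{n,j}$), so I would first treat them separately and then feed their limits into the $\hat{b}_n$-recursions.

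First I would analyze the threshold recursions. Since $y_t$ is a sum of independent Gaussians it has a continuous, symmetric, strictly increasing cdf. Hence the maps $c_y \mapsto \mathbb{P}(y_t > c_y) - 0.5$ and $\tilde{c}_y \mapsto \mathbb{P}(y_t > \tilde{c}_y) - 0.1587$ each have a unique zero at $\mathbb{E}[y]$ and $\mathbb{E}[y] + \sqrt{\textrm{Var}[y]}$ respectively, with the right sign structure for stability. Standard one-dimensional SA arguments (with boundedness enforced by the expanding-truncation scheme and with i.i.d.\ bounded observation noise) yield $c_{y,j} \stackrel{\textrm{a.s.}}{\rightarrow} \mathbb{E}[y]$ and $\tilde{c}_{y,j} \stackrel{\textrm{a.s.}}{\rightarrow} \mathbb{E}[y] + \sqrt{\textrm{Var}[y]}$, hence $\widehat{\textrm{V}y}_j \stackrel{\textrm{a.s.}}{\rightarrow} \textrm{Var}[y]$.

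Next I would handle the $\hat{b}_n$-recursion. Using (\ref{empirical_prob_expr}) and the definition (\ref{F_fn}) of $F$, the conditional expectation of the $j$-th update increment, given the past, equals
\begin{equation*}
F(b_n, \textrm{Var}[y]) - F(\hat{b}_{n,j}, \widehat{\textrm{V}y}_j) + \varepsilon_j,
\end{equation*}
where $\varepsilon_j \to 0$ almost surely because $c_{y,j} \to \mathbb{E}[y]$, $\widehat{\textrm{V}y}_j \to \textrm{Var}[y]$, and $F$ is continuous. Because $\textrm{Var}[y] = \sum_{m=1}^N b_m^2 \sigma_u^2 + \sigma_w^2 > b_n^2 \sigma_u^2$, the true parameter $b_n$ lies strictly inside the interval $\bigl(-\sqrt{\textrm{Var}[y]/\sigma_u^2},\sqrt{\textrm{Var}[y]/\sigma_u^2}\bigr)$, so by Lemma \ref{F_increasing_lemma} it is the unique root of $F(\cdot, \textrm{Var}[y]) = F(b_n, \textrm{Var}[y])$, and the strict monotonicity gives the correct (stable) sign for the mean drift on each side of $b_n$. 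A natural Lyapunov function is $V_n(b) = (b-b_n)^2$, whose decrease along the averaged ODE follows from strict monotonicity.

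The main obstacle will be rigorously verifying the hypotheses of Chen's expanding-truncation theorem, principally: (i) existence of a Lyapunov function with a single stable equilibrium (handled by $V_n$ above together with the monotonicity of $F$); (ii) the noise-averaging condition $\sum_j \alpha_j e_j$ converges, where $e_j$ is the martingale-difference part of the observation noise plus the vanishing perturbation $\varepsilon_j$ coming from the coupling with the threshold estimates; and (iii) the truncation point $x^*=\mathbf{0}$ belongs to the region of attraction. Claim (i) is immediate from Lemma \ref{F_increasing_lemma}; claim (ii) follows because the $\mathds{1}(\cdot)\mathds{1}(\cdot) - \mathbb{P}(\cdot,\cdot)$ terms are bounded martingale differences (so the relevant sum converges by the Khintchine--Kolmogorov theorem using $\sum \alpha_j^2 < \infty$) and $\alpha_j\varepsilon_j$ is summable after a Kronecker-type argument once $\varepsilon_j \to 0$; claim (iii) holds because the limit point $(E[y],E[y]+\sqrt{\textrm{Var}[y]},b_1,\dots,b_N)$ attracts trajectories of the averaged ODE started at the origin, which one can verify coordinate-wise using the monotone structure already established. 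Combining the three claims with Chen's theorem yields $\hat{b}_{n,j} \stackrel{\textrm{a.s.}}{\rightarrow} b_n$ for each $n=1,\dots,N$.
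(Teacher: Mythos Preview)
Your plan is essentially the paper's: cast Algorithm~1 as stochastic approximation with expanding truncations, decompose the noise into a martingale-difference part plus a vanishing bias, and invoke Chen's Theorem~2.4.1 (the paper uses the single Lyapunov function $v(x)=\|x-x^0\|^2$ over all $N+2$ coordinates rather than your coordinate-wise $V_n$, but that is cosmetic). There is, however, a technical wrinkle you have not identified. You call the innovations ``i.i.d.\ bounded observation noise'' for the threshold recursions and ``bounded martingale differences'' for the $\hat b_n$-recursions, but because the FIR system has memory $N$, the output $y_{2j-1}$ shares input samples with $y_{2j-3},y_{2j-5},\dots$ and therefore with the past-adapted threshold $c_{y,j}$. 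Hence $\mathbb{E}[\mathds{1}(y_{2j-1}>c_{y,j})\mid\mathcal F_{j-1}]$ is \emph{not} the stationary probability $\mathbb P(y_t>c_{y,j})$ that defines your drift, and the same issue arises for the joint indicators $\mathds{1}(u_{2j-1-n}>\mu)\mathds{1}(y_{2j-1}>c_{y,j})$ in the $\hat b_n$-updates. The paper closes this gap by observing that $y_{2j-1}$ \emph{is} independent of the lagged threshold $c_{y,j-L}$ once $L\ge N$, bounding $|c_{y,j}-c_{y,j-L}|\le\sum_{i=j-L}^{j-1}\alpha_i/2\to 0$, and sandwiching the conditional probability between stationary probabilities evaluated at $c_{y,j-L}\pm\sum_{i=j-L}^{j-1}\alpha_i/2$. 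This produces an additional vanishing $\varepsilon_j^{(2)}$-term, separate from (and logically prior to) the perturbation you attribute to $c_{y,j}\to\mathbb E[y]$; without it the noise decomposition you sketch does not close.

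A smaller point: your claim~(iii) misstates the requirement. In Chen's theorem the condition on the reset point $x^*$ is not that it lies in a basin of attraction of the averaged ODE, but that $v(x^*)<\inf_{\|x\|=c_0}v(x)$ for some $c_0>\|x^*\|$; with $v(x)=\|x-x^0\|^2$ and $x^*=\mathbf 0$ this is verified via the reverse triangle inequality by taking any $c_0>2\|x^0\|$.
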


\begin{proof}
See Appendix \ref{dumb_quantizer_convergence_thm_proof}. 

\end{proof}

\subsection{Unknown Parameters of Input Distribution}
\label{dumb_quantizer_unknown_input_sec}
In this subsection, we will relax Assumption 4, and assume that $\mu$ and $\sigma^2_u$ are also unknown.  The idea in the scheme below (Algorithm 2) is to estimate these quantities in a similar manner to how $\mathbb{E}[y]$ and $\textrm{Var}[y]$ were estimated in Algorithm~1.

However, a complication arises if we also try to estimate $\mathbb{E}[u]$ during the odd time slots and estimate  $\textrm{Var}[u]$ during the even time slots (or vice versa). This is because some of the quantities $\mathds{1}(u_{t-n} > c_{u}) \mathds{1}(y_t > c_{y}), n=1,\dots,N$, which are used in updating the parameter estimates, cannot be constructed at the estimator since we only have $\mathds{1}(u_{\tau} > c_{u})$ when $\tau$ is odd. 

To get around this difficulty, we propose the following. We will continue to estimate $\mathbb{E}[y]$  during the odd time slots $1,3,5,\dots$, and  to estimate $\textrm{Var}[y]$ during the even time slots $2,4,6,\dots$. But we will estimate  $\mathbb{E}[u]$ at time slots $1,2,5,6,9,10,\dots$, i.e. $2(j-1)+[j]_2, j=1,2,\dots$ where
\begin{equation}
\label{residue_class}
[j]_2 \triangleq \left\{ \begin{array}{cl} 0, &  j \equiv 0  \textrm{ (mod }2)\\ 1, & j \equiv 1 \textrm{ (mod }2), \end{array} \right. 
\end{equation}
 and we will estimate $\textrm{Var}[u]$ at time slots $3,4,7,8,11,12,\dots,$ i.e.  $2j+[j]_2, j=1,2,\dots$. Then there will be sufficient overlap to construct the quantities $\mathds{1}(u_{t-n} > c_{u}) \mathds{1}(y_t > c_{y})$. In order to see this, note that the odd time slots have the form of either $4(k-1)+1$ or $4(k-1)+3$ for $k=1,2,\dots$, while the time slots $1,2,5,6,9,10,\dots$ have the form of either $4(k'-1)+1$ or $4(k'-1)+2$ for $k'=1,2,\dots$. So the estimator can construct the quantities $\mathds{1}(u_{t'} > c_{u}) \mathds{1}(y_t > c_{y})$, for $t=4(k-1)+1$ or  $t=4(k-1)+3$, and $t'=4(k'-1)+1$ or $t'=4(k'-1)+2$. We have the following result:
\begin{lemma}
\label{dumb_quantizer_unknown_input_lemma}
Let $t$ be either of the form  $t=4(k-1)+1$ or  $t=4(k-1)+3$, and let $t'$ be either of the form $t'=4(k'-1)+1$ or $t'=4(k'-1)+2$. Then for any $n \in \{1,\dots,N\}$, there are infinitely many pairs $(k,k') \in \mathbb{N} \times \mathbb{N}$ satisfying
$$t-t' = n.$$
\end{lemma}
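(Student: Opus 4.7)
The plan is to reduce the claim to a residue analysis modulo $4$ and then exhibit infinitely many integer pairs in each case. First I would enumerate the four possible combinations of forms for $(t,t')$ and compute $t-t'$ in each:
\begin{itemize}
\item $t=4(k-1)+1$, $t'=4(k'-1)+1$ gives $t-t' = 4(k-k')$,
\item $t=4(k-1)+1$, $t'=4(k'-1)+2$ gives $t-t' = 4(k-k')-1$,
\item $t=4(k-1)+3$, $t'=4(k'-1)+1$ gives $t-t' = 4(k-k')+2$,
\item $t=4(k-1)+3$, $t'=4(k'-1)+2$ gives $t-t' = 4(k-k')+1$.
\end{itemize}
The residues of these four expressions modulo $4$ are $0,\,3,\,2,\,1$ respectively, so together they cover every residue class mod $4$.

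Next, given any $n \in \{1,\dots,N\}$, I would pick the case whose residue class matches $n \pmod 4$. In that case $t-t'=n$ reduces to an equation of the form $k-k' = d$ for a specific integer $d$ (namely $d = n/4,\,(n+1)/4,\,(n-2)/4$, or $(n-1)/4$ according to whether $n\equiv 0,3,2,1 \pmod 4$). For this single linear Diophantine equation, choosing any $k' \in \mathbb{N}$ large enough that $k = k'+d$ is also a positive integer (e.g., $k' \geq \max(1,1-d)$) produces a valid pair $(k,k')$, and sweeping $k'$ over such positive integers yields infinitely many such pairs.

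There is no real obstacle here; the lemma is essentially a pigeonhole observation that the set of achievable differences $t-t'$ exhausts $\mathbb{Z}$ (in fact, every integer $\geq -2$ can be written as $t-t'$ in infinitely many ways). I would conclude by remarking that this is precisely what is needed so that every correlation $\mathds{1}(u_{t-n}>c_u)\mathds{1}(y_t>c_y)$ required by the identification scheme of Section~\ref{dumb_quantizer_unknown_input_sec} can in fact be formed from the transmitted binary data, infinitely often.
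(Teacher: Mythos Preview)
Your proposal is correct and follows essentially the same approach as the paper: both enumerate the four $(t,t')$ combinations, observe that the differences $t-t'$ cover all residue classes modulo $4$, and then solve the resulting linear equation in $(k,k')$ to exhibit infinitely many pairs. Your version (fixing $k'$ and setting $k=k'+d$, with explicit care that both be positive integers) is simply the mirror image of the paper's parameterization (which fixes $k$ and solves for $k'$), and is if anything slightly more careful about the positivity constraint.
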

\begin{proof}
For each of the different forms of $t$ and $t'$, we have $t-t'$ given by 
\begin{equation}
\label{t_difference_mod4}
\begin{split}
4(k-1)+1 - [4(k'-1)+1 ] & = 4 k - 4k'  \equiv  0 \textrm{ (mod } 4)\\
\textrm{or } 4(k-1)+1 - [4(k'-1)+2 ] & = 4 k - 4k' -1  \equiv 3 \textrm{ (mod } 4)\\
\textrm{or } 4(k-1)+3 - [4(k'-1)+1 ] & = 4 k - 4k' +2  \equiv 2 \textrm{ (mod } 4)\\
\textrm{or } 4(k-1)+3 - [4(k'-1)+2 ] & = 4 k - 4k' +1  \equiv 1 \textrm{ (mod } 4)
\end{split}
\end{equation}
Now any $n \in \{1,\dots,N\}$ must be equal to one of 0, 1, 2, or 3 modulo 4. 
Suppose first that $n \equiv 1 \textrm{ (mod } 4)$. Pick an arbitrary $k \in \mathbb{N}$. Then for $t$ of the form $t=4(k-1)+3$, and $t'$ of the form  $t'=4(k'-1)+2$, we have from the last line of (\ref{t_difference_mod4}) that $t-t' = n$ is satisfied for $k' = \frac{4k+1-n}{4}$, and $ k' \in \mathbb{N}$ since $n \equiv 1 \textrm{ (mod } 4)$. As $k$ is arbitrary, one can find infinitely many pairs $(k,k') \in \mathbb{N} \times \mathbb{N}$ satisfying  $t-t' = n$ when  $n \equiv 1 \textrm{ (mod } 4)$.  

A similar argument applies when  $n$  modulo 4 is equal to 0, 2, or 3. 
\end{proof}

The identification scheme is formally given  as Algorithm 2, where we use the variables:
\begin{equation}
\label{g_j_bar_defn}
\begin{split}
g(n,j)  &\triangleq \! \left\{ \begin{array}{cl} \mathds{1}\big(2j-1 \!\equiv\! 1 \!\textrm{ (mod } \!4) \big), & \textrm{if } n \equiv 0 \textrm{ (mod } 4) \\ & \textrm{or } n \equiv 3 \textrm{ (mod } 4) \\
\mathds{1}\big(2j-1 \!\equiv\! 3 \!\textrm{ (mod } \!4) \big), & \textrm{if } n \equiv 1 \textrm{ (mod } 4) \\ & \textrm{or } n \equiv 2 \textrm{ (mod } 4),
 \end{array} \right. \\
 \bar{j}(n,j) &  \triangleq \left\{ \begin{array}{cl}  j - \lfloor \frac{n}{2} \rfloor, & \textrm{if }  j - \lfloor \frac{n}{2} \rfloor \geq 1 \\ 1, & \textrm{otherwise},  \end{array} \right. 
 \end{split}
\end{equation}
to keep track of which parameters can be updated and past thresholds. In addition we also use the function $h(c_y,\tilde{c}_y,c_u,\tilde{c}_u,b)$ defined by (\ref{h_defn}).
\begin{figure*}[!t]
\begin{align}
\label{h_defn}
& h(c_y,\tilde{c}_y,c_u,\tilde{c}_u,b) \nonumber \\
& \triangleq \left\{ \begin{array}{ll} \int\limits_{c_u}^\infty \! \Big[1 \!-\! \Phi  \Big( \frac{- b(u - c_u)}{\sqrt{(\tilde{c}_y - c_y)^2 - b^2 (\tilde{c}_u - c_u)^2}}\Big) \Big]  \frac{1}{\sqrt{2 \pi (\tilde{c}_u - c_u)^2} } \textrm{exp} \left(- \frac{(u - c_u)^2}{2 (\tilde{c}_u - c_u)^2} \right)du, & - \sqrt{\frac{(\tilde{c}_y-c_y)^2}{(\tilde{c}_u-c_u)^2}} < b < \sqrt{\frac{(\tilde{c}_y-c_y)^2}{(\tilde{c}_u-c_u)^2}} \\
 \int\limits_{c_u}^\infty \! \frac{1}{\sqrt{2 \pi (\tilde{c}_u - c_u)^2} } \textrm{exp} \left(- \frac{(u - c_u)^2}{2 (\tilde{c}_u - c_u)^2} \right)du  = \frac{1}{2},  & b \geq \sqrt{\frac{(\tilde{c}_y-c_y)^2}{(\tilde{c}_u-c_u)^2}}\\ 
 0, & b \leq -\sqrt{\frac{(\tilde{c}_y-c_y)^2}{(\tilde{c}_u-c_u)^2}}. \end{array} \right.
\end{align}
\end{figure*}

\begin{table*}[t!]
\fbox{\parbox{\textwidth}{
\textbf{Algorithm 2}
\begin{itemize} 
\item Choose a sequence $\{\alpha_j\}$  satisfying $\alpha_j > 0$, $\sum_{j=0}^\infty \alpha_j = \infty$,  and $\sum_{j=0}^\infty \alpha_j^2 < \infty$
\item Initialize $c_{y,1}=0$, $\tilde{c}_{y,1}=1$, $c_{u,1}=0$, $\tilde{c}_{u,1}=1$, $\hat{b}_{n,1}=0, n=1,\dots,N$
\item For $j=1,2,\dots$,  compute:
\begin{equation}
\label{alg2_recursions}
\begin{split}
& \left[ \!\!\!\begin{array}{c} c_{y,j+1} \\ \tilde{c}_{y,j+1} \\ c_{u,j+1} \\ \tilde{c}_{u,j+1} \\ \hat{b}_{1,j+1} \\ \vdots \\ \hat{b}_{N,j+1}  \end{array} \!\!\! \right]  = \Pi_{M_{\varsigma(j)}}\left( \left[\!\!\! \begin{array}{c} c_{y,j} \\ \tilde{c}_{y,j} \\  c_{u,j} \\ \tilde{c}_{u,j} \\ \hat{b}_{1,j} \\ \vdots \\ \hat{b}_{N,j} \end{array} \!\!\! \right] + \alpha_j \left[\!\!\!\begin{array}{c} \mathds{1} (y_{2j-1} > c_{y,j}) - 0.5 \\  \mathds{1} (y_{2j} > \tilde{c}_{y,j}) - 0.1587 \\  \mathds{1} (u_{2(j-1)+[j]_2} > c_{u,j}) - 0.5 \\  \mathds{1} (u_{2j+[j]_2} > \tilde{c}_{u,j}) - 0.1587 \\ 
g(1,j) \Big[\mathds{1} (u_{2j-2} > c_{u,\bar{j}(1,j)}) \mathds{1} (y_{2j-1} > c_{y,j}) 
- h(c_{y,j}, \tilde{c}_{y,j},c_{u,\bar{j}(1,j)}, \tilde{c}_{u,j}, \hat{b}_{1,j} \Big]  \\ \vdots \\  
g(N,j) \Big[\mathds{1} (u_{2j-1-N} > c_{u,\bar{j}(N,j)}) \mathds{1} (y_{2j-1} > c_{y,j}) 
- h(c_{y,j}, \tilde{c}_{y,j},c_{u,\bar{j}(N,j)}, \tilde{c}_{u,j}, \hat{b}_{N,j} \Big] \end{array} \!\!\! \right] \right)
\end{split}
\end{equation}
where  $\Pi_{M_{\varsigma(j)}}(\cdot)$ is defined by (\ref{expanding_truncations})-(\ref{truncation_operation}), $[\cdot]_2$  by (\ref{residue_class}), $g(\cdot,\cdot)$ and $\bar{j}(\cdot,\cdot)$ by (\ref{g_j_bar_defn}), and $h(\cdot,\cdot,\cdot,\cdot,\cdot)$ by (\ref{h_defn}).
\end{itemize}}}
\end{table*}
By Lemma \ref{dumb_quantizer_unknown_input_lemma}, there will  be an infinite number of regularly spaced time slots where the quantities $\mathds{1}(u_{t-n} > c_{u}) \mathds{1}(y_t > c_{y})$ for each $n \in \{1,\dots,N\}$ can be constructed at the estimator. In particular, the different cases in the definition of $ g(n,j)$ in Algorithm 2 follow from (\ref{t_difference_mod4}) in the proof of Lemma \ref{dumb_quantizer_unknown_input_lemma}. Note also that the integral 
\begin{align*}
&\int_{c_u}^\infty \Big[1 - \Phi  \Big( \frac{- b(u - c_u)}{\sqrt{(\tilde{c}_y - c_y)^2 - b^2 (\tilde{c}_u - c_u)^2}}\Big) \Big] \\ & \qquad\times\frac{1}{\sqrt{2 \pi (\tilde{c}_u - c_u)^2} } \textrm{exp} \left(- \frac{(u - c_u)^2}{2 (\tilde{c}_u - c_u)^2} \right)du 
\end{align*}
in (\ref{h_defn}) can be evaluated by lookup table, by using a change of variable $v = u-c_u$ and precomputing \\
$\int_{0}^\infty \left[1 - \Phi  \left( -x v\right) \right] \frac{1}{\sqrt{2 \pi z} } \textrm{exp} \left(- \frac{v^2}{2 z} \right)dv $ for different values of $x$ and $z$. 

\begin{theorem}
\label{dumb_quantizer_convergence_thm2}
Under Algorithm 2 and Assumptions $1-3$, $\hat{b}_{n,j}  \stackrel{\textrm{a.s.}}{\rightarrow} b_n$ as $j \rightarrow \infty$ for $n=1,\dots,N$. 
\end{theorem}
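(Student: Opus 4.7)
The proof follows the same template as Theorem \ref{dumb_quantizer_convergence_thm}, viewing (\ref{alg2_recursions}) as a stochastic approximation with expanding truncations and verifying the hypotheses of Theorem 2.4.1 of \cite{Chen_SA}. The plan is to decompose the iterate into two blocks and analyze them in cascade. The first block consists of the four threshold variables $c_{y,j}, \tilde{c}_{y,j}, c_{u,j}, \tilde{c}_{u,j}$, whose innovation terms do not depend on any other component of the iterate. Each evolves as a scalar stochastic approximation seeking the root of a strictly monotone continuous function, namely $\mathbb{P}(y_t > c) - 0.5$, $\mathbb{P}(y_t > \tilde{c}) - 0.1587$, $\mathbb{P}(u_t > c) - 0.5$, and $\mathbb{P}(u_t > \tilde{c}) - 0.1587$. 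By exactly the argument used for $c_{y,j}$ and $\tilde{c}_{y,j}$ in the proof of Theorem \ref{dumb_quantizer_convergence_thm}, these four sequences converge almost surely to $\mathbb{E}[y]$, $\mathbb{E}[y] + \sqrt{\textrm{Var}[y]}$, $\mu$, and $\mu + \sigma_u$ respectively. The fact that the $u$-related updates occur on the subsequences $\{2(j-1)+[j]_2\}$ and $\{2j+[j]_2\}$ rather than at every step is irrelevant, as both subsequences have density $1/2$ in $\mathbb{N}$, so the effective step sizes remain non-summable while their squares remain summable.

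Turning to the $\hat{b}_{n,j}$ components, the key observation is that as the four thresholds converge to their limits, the driving function $h(c_{y,j},\tilde{c}_{y,j},c_{u,\bar{j}(n,j)},\tilde{c}_{u,j},\cdot)$ in (\ref{h_defn}) converges pointwise to $h(\mathbb{E}[y],\mathbb{E}[y]+\sqrt{\textrm{Var}[y]},\mu,\mu+\sigma_u,\cdot)$, which coincides with the function $F(\cdot,\textrm{Var}[y])$ of (\ref{F_fn}). I would therefore analyze the $\hat{b}_{n,j}$ recursion as a stochastic approximation with a slowly varying parameter (the thresholds), whose drift converges almost surely to the same drift appearing in Theorem \ref{dumb_quantizer_convergence_thm}. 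The equilibrium equation in the limit is $\mathbb{P}(u_{t-n} > \mu, y_t > \mathbb{E}[y]) = F(b,\textrm{Var}[y])$, whose unique solution on $(-\sqrt{\textrm{Var}[y]/\sigma_u^2}, \sqrt{\textrm{Var}[y]/\sigma_u^2})$ is $b_n$ by Lemma \ref{F_increasing_lemma} together with (\ref{empirical_prob_expr}). The subsampling by $g(n,j)$ is handled via Lemma \ref{dumb_quantizer_unknown_input_lemma}: for each $n$ the indicator equals one on an arithmetic progression of density $1/2$, so $\sum_j \alpha_j g(n,j) = \infty$ and $\sum_j \alpha_j^2 g(n,j)^2 < \infty$, preserving the step-size conditions on the update subsequence.

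The main obstacle is justifying the cascade rigorously. A clean route is to write, on the event that the four threshold sequences eventually take values in a fixed compact set containing their limits (which is ensured by the expanding truncation mechanism, as in the proof of Theorem \ref{dumb_quantizer_convergence_thm}), the $\hat{b}_{n,j}$ innovation as the sum of an ideal term computed with the limiting thresholds, a bias term equal to the difference between $h$ evaluated at the running thresholds and at the limiting thresholds, and a martingale difference term. The bias vanishes almost surely by continuity of $h$ in its first four arguments on this compact set, and the martingale term has uniformly bounded increments and hence summable $\alpha_j^2$-weighted variance. With this decomposition, the noise and attractor conditions of Theorem 2.4.1 of \cite{Chen_SA} are verified by the same arguments as in Theorem \ref{dumb_quantizer_convergence_thm}, and almost sure convergence of $\hat{b}_{n,j}$ to $b_n$ follows for each $n \in \{1,\dots,N\}$.
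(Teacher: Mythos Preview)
Your proposal is correct and follows essentially the same template as the paper's proof: cast (\ref{alg2_recursions}) as a stochastic approximation with expanding truncations, first establish almost-sure convergence of the four threshold sequences by the scalar monotone-root argument, and then show that the $\hat{b}_{n,j}$ innovation decomposes into a martingale-difference part plus a bias that vanishes by continuity once the thresholds have converged. The handling of the $g(n,j)$ subsampling as a density-$1/2$ arithmetic progression also matches the paper's subsequence argument.

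There is one organizational difference worth noting. The paper does \emph{not} treat the delayed threshold $c_{u,\bar{j}(n,j)}$ by folding it into the bias term as you do. Instead it first analyses a fictitious \emph{non-delayed} version of Algorithm~2 (with $c_{u,\bar{j}(n,j)}$ replaced by $c_{u,j}$), proves convergence for that, and then invokes the \emph{asynchronous} stochastic-approximation framework of \cite[Sec.~5.6]{Chen_SA}, verifying the extra condition A5.6.5, namely $\lim_{j\to\infty}\sum_{i=\bar{j}(n,j)}^{j}\alpha_i=0$, which holds because the delay $j-\bar{j}(n,j)=\lfloor n/2\rfloor$ is bounded and $\alpha_j\to 0$. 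Your route---observing that bounded delay plus $c_{u,j}\stackrel{\textrm{a.s.}}{\to}\mu$ forces $c_{u,\bar{j}(n,j)}\stackrel{\textrm{a.s.}}{\to}\mu$ and absorbing the discrepancy into the vanishing bias---is more direct and avoids the asynchronous machinery; the paper's route is more modular and makes the dependence on Chen's general theory explicit. One small imprecision in your write-up: the bias term is not only the difference between $h$ at running versus limiting thresholds; there is a second contribution from the indicator $\mathds{1}(u_{2j-1-n}>c_{u,\bar{j}(n,j)})\mathds{1}(y_{2j-1}>c_{y,j})$ compared with its stationary probability at the limiting thresholds. Both pieces vanish by the same continuity argument, so this does not affect the validity of your outline.
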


\begin{proof}
See Appendix \ref{dumb_quantizer_convergence_thm2_proof}.
\end{proof}

\subsection{Simulation Results}
\label{dumb_quantizer_sim_sec}
We consider a third order system with $\mu=1$, $\sigma_u^2=1$, $\sigma_w^2=1$, $b_1=0.2$, $b_2=-0.2$, $b_3=0.6$. 
In the plots below  we will use the sequence $\alpha_j = \frac{10}{j}$. An initial truncation bound of $M_0=1000$ was used, but was never exceeded in our simulations. 
We first consider the case where  $\mu$ and $\sigma_u^2$ are known to the estimator. Fig. \ref{dumb_quantizer_alg1_plot} shows the estimates $\hat{b}_1, \hat{b}_2, \hat{b}_3$ from Algorithm 1,  and as expected from Theorem \ref{dumb_quantizer_convergence_thm}, they converge to the true values.
\begin{figure}[t!]
\centering 
\includegraphics[scale=0.5]{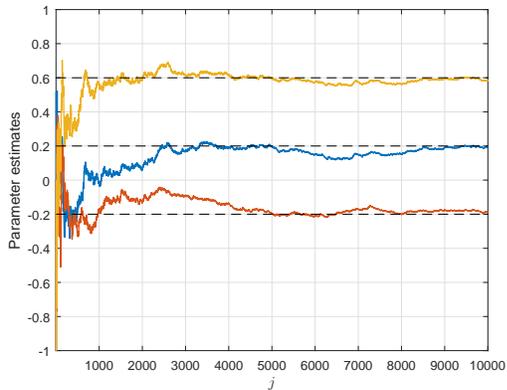} 
\caption{Parameter Estimates: Algorithm 1}
\label{dumb_quantizer_alg1_plot}
\end{figure}

To look at the convergence behaviour, we can approximate the variance of $j^{1/2} (\hat{b}_{n,j}-b_n), n=1,2,3$  \cite{Chen_SA}, \cite{KushnerYin}. However, in order to allow for a fairer comparison with the algorithms of Section \ref{smart_quantizer_sec}, we will instead approximate the variance of $t^{1/2} (\hat{b}_{n,j}-b_n), n=1,2,3$, where $t$ is the time index. This is done by computing the sample variance over 10000 different  simulation runs of Algorithm 1, and are given in Fig.  \ref{dumb_quantizer_alg1_convergence_plot}. 
\begin{figure}[t!]
\centering 
\includegraphics[scale=0.5]{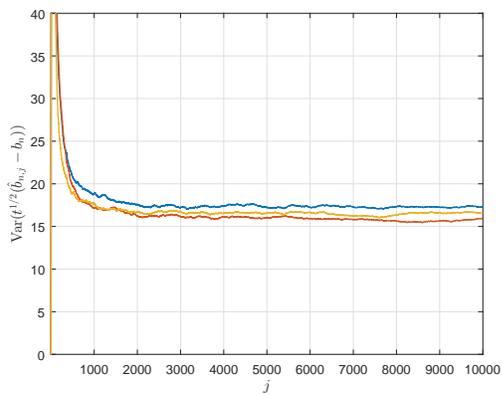} 
\caption{Convergence Behaviour: Algorithm 1}
\label{dumb_quantizer_alg1_convergence_plot}
\end{figure}

Next, we consider the system identification scheme of Section \ref{dumb_quantizer_unknown_input_sec} where $\mu$ and $\sigma_u^2$ are not assumed to be known. 
Fig. \ref{dumb_quantizer_alg2_plot} shows the estimates $\hat{b}_1, \hat{b}_2, \hat{b}_3$ from Algorithm 2. Also in this case, the estimates converge to the true values, in agreement with Theorem \ref{dumb_quantizer_convergence_thm2}.
\begin{figure}[t!]
\centering 
\includegraphics[scale=0.5]{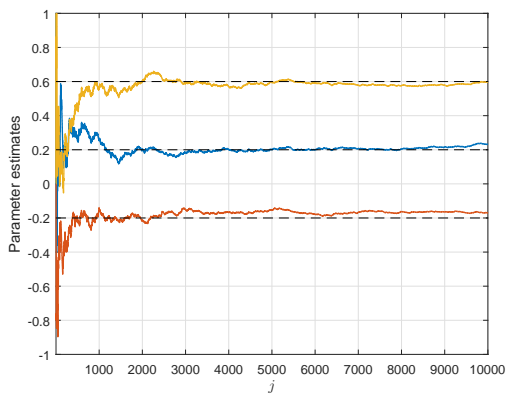} 
\caption{Parameter Estimates: Algorithm 2}
\label{dumb_quantizer_alg2_plot}
\end{figure} 
Approximations of the variances $t^{1/2} (\hat{b}_{n,j}-b_n), n=1,2,3$ using Monte Carlo approximations over 10000 simulation runs are plotted in Fig. \ref{dumb_quantizer_alg2_convergence_plot}. We see that the normalized variances  in Fig. \ref{dumb_quantizer_alg2_convergence_plot} are significantly higher (more than double) than for Algorithm 1,  due to the need to also estimate the parameters of the input distribution. 
\begin{figure}[t!]
\centering 
\includegraphics[scale=0.5]{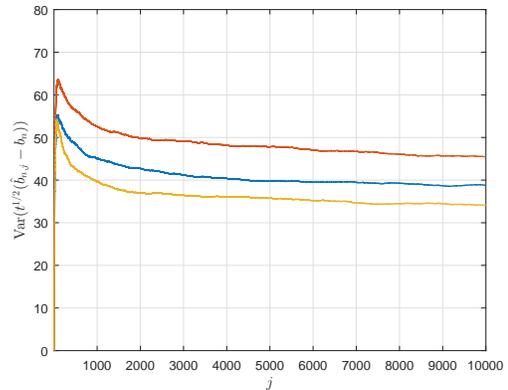} 
\caption{Convergence Behaviour: Algorithm 2}
\label{dumb_quantizer_alg2_convergence_plot}
\end{figure} 

\section{Quantizers With Computational Capabilities}
\label{smart_quantizer_sec}
The setup in Section \ref{dumb_quantizer_sec} assumes knowledge of the input and noise distributions. Specifically, we assumed that the input and noise were both Gaussian. For unknown distributions and FIR systems of order $N > 1$, it appears to be  difficult to come up with an identification scheme that is consistent and/or efficient.\footnote{For $N=1$ a consistent identification scheme was developed in \cite{LianLuo} for symmetrically distributed inputs and noises.} In this section we consider the case where the input and output quantizers are ``smart'', in the sense that they have some computational and storage capabilities, and have access to the unquantized inputs and outputs.  For instance, in many wireless sensor network applications such as in environmental monitoring \cite{PalaniswamiRaoBainbridge,YiLoMakLeung} and process industries \cite{Zhao_industrial}, the sensors used often have sensing, computation and wireless communication capabilities. In such applications the quantization or analog-to-digital (A/D) conversion is done by the sensor, and additionally these sensors would also have some on-board computing capabilities to do additional processing of the data.  For such situations  we present in this section identification schemes which can estimate the parameters for unknown input and noise distributions. 

\subsection{Data Generating System and Model}
As in Section \ref{dumb_quantizer_sec}, the system to be identified is an $N$-th order FIR system
\begin{equation}
\label{FIR_model}
y_t = b_1 u_{t-1} + b_2 u_{t-2} + \dots + b_N u_{t-N} + w_t.
\end{equation}
We now make the following assumptions:
\\ \emph{Assumption 5}: The input and output quantizers have computational and storage capabilities.
\\ \emph{Assumption 6}: The input sequence $\{u_t\}$ and the noise sequence $\{w_t\}$ are i.i.d. and mutually independent. Moreover, $w_t$ is zero mean.
\\ \emph{Assumption 7}: The model order $N$ is known.

\subsection{Identification Scheme for Known Input Distribution}
\label{smart_quantizer_scheme_sec}
In this subsection, we will also make the following assumption: 
\\ \emph{Assumption 8}: The input distribution is known to the estimator.

The noise distribution is not assumed to be known, apart from assuming that it has zero mean. As in  Section \ref{dumb_quantizer_scheme_sec}, we will start by describing the ideas involved, before formally stating the identification scheme as Algorithm 3, followed by a proof of strong consistency of the parameter estimates. 

First,  the quantized information $\mathds{1}(u_{t} > c_u), t=1,2,\dots$ sent by the input quantizer to the estimator is forwarded by the estimator to the output quantizer.
Whenever $u_{t} > c_u$, we increment an index $i$ by one. 
Denote the times where $u_{t} > c_u$ by $\tau_1, \tau_2, \dots$, with $\tau_1 < \tau_2 < \dots$. 
The output quantizer computes the following $N$  quantities (after the corresponding output is available at the quantizer):
$$d_{n,i} \triangleq \frac{1}{i} \sum_{i'=1}^i y_{\tau_{i'} + n}, \quad n=1,\dots,N,$$ using the recursions 
\begin{equation}
\label{dn_computation}
d_{n,i} = \frac{1}{i}(y_{\tau_i + n} + (i-1) d_{n,i-1}), \quad n=1,\dots,N.
\end{equation}
By  the ergodic theorem \cite[p. 393]{GrimmettStirzaker}, we have that as $i \rightarrow \infty$,
\begin{equation}
\label{d_convergence}
\begin{split}
d_{n,i} &  \stackrel{\textrm{a.s.}}{\rightarrow} \mathbb{E}[y_{t+n} | u_{t} > c_u] \\
& = \mathbb{E}\Big[\sum_{m=1}^N b_m u_{t+n-m} + n_{t + n} \Big| u_{t} > c_u\Big] \\
& = b_1  \mathbb{E}[u] + \dots + b_{n-1}  \mathbb{E}[u] + b_n \mathbb{E}[u | u > c] \\ & \quad + b_{n+1}  \mathbb{E}[u] + \dots + b_N \mathbb{E}[u] \\
& \triangleq d_n, \quad n=1,\dots,N.
\end{split}
\end{equation}

$d_{n,i}$ is computed at the output quantizer. In order for the estimator to be able to approximate $d_{n,i}$, information is sent from the output quantizer to the estimator as follows. Whenever the index $i$ is a multiple of $N$, 
another iteration index $j$ is incremented by one and the following estimates of $d_1,\dots,d_N$ are computed at the output quantizer:
\begin{equation}
\label{sign_algorithm}
\begin{split}
\hat{d}_{n,j+1} &= \hat{d}_{n,j} + \alpha_j \textrm{sgn}(d_{n,i} - \hat{d}_{n,j}), \quad n=1,\dots,N,
\end{split}
\end{equation}
where $\{\alpha_j\}$ is a sequence satisfying $\alpha_j > 0, \sum_{j=0}^\infty \alpha_j = \infty, \sum_{j=0}^\infty \alpha_j^2 < \infty$, and 
$$\textrm{sgn}(x) \triangleq \left\{ \begin{array}{rcl}  -1 & , & x < 0 \\ 1 & , & x > 0. \end{array} \right.$$ 
The term $\textrm{sgn}(d_{n,i}-\hat{d}_{n,j})$ is essentially binary, and is sent to the estimator by the output quantizer, which also computes $\hat{d}_{n,j+1}$ according to (\ref{sign_algorithm}), assuming that both the estimator and the quantizer have access to the initial condition $\hat{d}_{n,0}$. Alternatively, $\hat{d}_{n,j}$ can be computed at the estimator only and transmitted to the output quantizer.

Note that $\hat{d}_{n,j}$ is updated at $1/N$-th the rate of $d_{n,i}$, in order for each of the quantities $\textrm{sgn}(d_{n,i} - \hat{d}_{n,j}), n=1,\dots, N$ to be sent in separate time slots. From  (\ref{d_convergence}) and (\ref{sign_algorithm}),  we can show (see the proof of Theorem \ref{smart_quantizer_convergence_thm}) that 
\begin{equation}
\label{dhat_convergence}
\begin{split}
\hat{d}_{n,j}  \stackrel{\textrm{a.s.}}{\rightarrow} d_n & = b_1  \mathbb{E}[u] + \dots + b_{n-1}  \mathbb{E}[u] + b_n \mathbb{E}[u | u > c] \\ & \quad + b_{n+1}  \mathbb{E}[u] + \dots + b_N \mathbb{E}[u].
\end{split}
\end{equation}

Finally, the parameters $b_1,b_2,\dots,b_N$ of the $N$-th order system (\ref{FIR_model}) are estimated by solving for $\hat{b}_{1,j}, \hat{b}_{2,j}, \dots, \hat{b}_{N,j}$ the following set of linear equations:
\begin{equation}
\label{linear_eqns}
\mathbf{U}\left[\hat{b}_{1,j}, \dots,  \hat{b}_{N,j}\right]^T = \left[\hat{d}_{1,j}, \dots,  \hat{d}_{N,j}\right]^T,
\end{equation}
where
\begin{equation}
\label{U_defn} 
\mathbf{U} \triangleq \left[\begin{array}{cccc} \mathbb{E}[u|u > c_u] & \mathbb{E}[u] & \dots & \mathbb{E}[u] \\ \mathbb{E}[u] & \mathbb{E}[u|u > c_u] & \dots & \mathbb{E}[u] \\
\vdots & \vdots & \ddots & \vdots \\ \mathbb{E}[u] & \mathbb{E}[u] & \dots & \mathbb{E}[u|u>c_u] 
 \end{array} \right]. 
 \end{equation} 
Note that $\mathbf{U}$ is known at the estimator, since by Assumption 8 the estimator  knows the input distribution. 
The equations (\ref{linear_eqns}) will have a unique solution under the following assumption:
\\ \emph{Assumption 9}: The input distribution of $u$ and input quantizer threshold $c_u$ satisfies $\mathbb{E}[u|u > c_u] \neq \mathbb{E}[u]$ and $\mathbb{E}[u|u > c_u] \neq (1-N) \mathbb{E}[u]$. 

We note that apart from degenerate cases such as $u$ being constant, $c_u$ can always be chosen such that Assumption 9 is satisfied.
Under Assumption 9, when $\mathbb{E}[u] \neq 0$ we have uniqueness of solutions to (\ref{linear_eqns}) by the following result:
\begin{lemma}
\label{invertible_matrix_lemma}
The $N \times N$ matrix 
\begin{equation}
\label{A_matrix}
\mathbf{A} \triangleq \left[ \begin{array}{cccc} a & 1 & \dots & 1 \\ 1 & a & \dots & 1 \\ \vdots & \vdots & \ddots & \vdots \\ 1 & 1 & \dots & a \end{array} \right]
\end{equation}
is invertible if $a \neq 1 $ and $a \neq 1-N$. 
\end{lemma}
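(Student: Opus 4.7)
The plan is to recognize that $\mathbf{A}$ has a very simple spectral structure: writing $\mathbf{J}$ for the $N \times N$ all-ones matrix, we have $\mathbf{A} = (a-1)\mathbf{I} + \mathbf{J}$. So I would reduce the invertibility question to determining the eigenvalues of $\mathbf{J}$, which are standard: the all-ones vector $\mathbf{1}$ is an eigenvector with eigenvalue $N$, and any vector in the $(N-1)$-dimensional subspace $\mathbf{1}^\perp$ is an eigenvector with eigenvalue $0$. Hence the full spectrum of $\mathbf{J}$ is $\{N, 0, 0, \dots, 0\}$.

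From this, the eigenvalues of $\mathbf{A} = (a-1)\mathbf{I} + \mathbf{J}$ are obtained by shifting each eigenvalue of $\mathbf{J}$ by $a-1$, giving one eigenvalue equal to $(a-1) + N = a + N - 1$ and an eigenvalue $a-1$ with multiplicity $N-1$. The matrix $\mathbf{A}$ is invertible if and only if none of its eigenvalues vanish, i.e., if and only if $a \neq 1$ and $a \neq 1 - N$, which is exactly the hypothesis. Equivalently, one could just compute $\det(\mathbf{A}) = (a-1)^{N-1}(a + N - 1)$ directly by row reduction (subtract the first row from all others to get an upper-triangular-like structure), but the eigenvalue argument is cleaner.

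There is no real obstacle here; the only mild subtlety is justifying that the eigenspace of $\mathbf{J}$ for eigenvalue $0$ has dimension exactly $N-1$, which follows immediately from $\operatorname{rank}(\mathbf{J}) = 1$. The result then follows in one or two lines.
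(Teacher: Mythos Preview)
Your proposal is correct. It is, however, a genuinely different route from the paper's own proof. The paper argues directly that the kernel of $\mathbf{A}$ is trivial: writing out the system $\mathbf{A}\mathbf{x}=\mathbf{0}$ as
\[
a x_1 + x_2 + \dots + x_N = 0,\quad\dots,\quad x_1 + x_2 + \dots + a x_N = 0,
\]
subtracting the second equation from the first yields $(a-1)(x_1 - x_2) = 0$, hence $x_1 = x_2$ since $a\neq 1$; repeating gives $x_1 = \dots = x_N$, and substituting back into any one equation gives $(a+N-1)x_1 = 0$, hence $x_1 = 0$ since $a\neq 1-N$. Your approach instead exploits the structural decomposition $\mathbf{A} = (a-1)\mathbf{I} + \mathbf{J}$ and reads off the eigenvalues from those of $\mathbf{J}$. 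The spectral argument is arguably more informative (it gives the full eigenstructure and determinant $\det(\mathbf{A}) = (a-1)^{N-1}(a+N-1)$ for free, and shows immediately that the conditions are also necessary), while the paper's argument is entirely elementary and self-contained, requiring no appeal to eigenvalue theory. Either is perfectly adequate here.
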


\begin{proof}
We use the property that a matrix $\mathbf{A}$ is invertible if and only if $\mathbf{A x} =\mathbf{0} \Rightarrow \mathbf{x} = \mathbf{0}$. Denoting $\mathbf{x} \triangleq [x_1,x_2,\dots,x_N]^T$, $\mathbf{Ax}=\mathbf{0}$ (for $\mathbf{A}$ given by (\ref{A_matrix})) is equivalent to 
\begin{equation}
\label{linear_eqns2}
\begin{split}
&a x_1 + x_2 + \dots + x_N  = 0 \\
&x _1 + a x_2 + \dots + x_N  = 0 \\
&\quad\quad\quad \vdots \\
&x_1 + x_2 + \dots + a x_N  = 0.
\end{split}
\end{equation}
Subtracting the second equation from the first equation  in (\ref{linear_eqns2}), we have $(x_1-x_2)(a-1) = 0$, which implies that $x_1=x_2$ since $a \neq 1$. Repeating this  argument leads to
\begin{equation}
\label{xn_equal}
x_1=x_2=\dots=x_N.
\end{equation} 
Using (\ref{xn_equal}) on the first equation of (\ref{linear_eqns2}), we have 
$(a+N-1)x_1 =0$, which implies $x_1=0$ since $a \neq 1-N$. Hence $\mathbf{x}=\mathbf{0}$. 
\end{proof}

We now formally state the identification scheme as Algorithm 3. In the formal description, the sets $\mathcal{D}_t$ and indices $i_n$ are used to keep track of which of the quantities $d_{n,i}, \ n=1,\ldots,N$, should be updated at time $t$.

\begin{table*}[t!]
\fbox{\parbox{\textwidth}{
\textbf{Algorithm 3}
\begin{itemize} 
\item Choose a $c_u$ satisfying Assumption 9, and a sequence $\{\alpha_j\}$ satisfying $\alpha_j > 0$, $ \sum_{j=0}^\infty \alpha_j = \infty$, and $\sum_{j=0}^\infty \alpha_j^2 < \infty$ 
\item Initialize $i=0, j=0$,  $d_{n,0} = 0, \hat{d}_{n,0}=0, i_n = 1, n=1,\dots,N, \quad \mathcal{D}_t = \emptyset, \forall t$
\item For  $t=1,2,\dots$, do:
\begin{itemize}
\item If $u_{t} > c_u$, set $\tau_i = t$, $i := i + 1$
\begin{itemize} \item[--] If $ i \equiv 0 \textrm{ (mod } N)$, set $j := j + 1$ \end{itemize}
\item At the input quantizer:
\begin{enumerate}
\item Send $\mathds{1}(u_{t} > c_u)$ to estimator, which passes it on to the output quantizer 
\end{enumerate}
\item At the output quantizer, when $u_t > c_u$:
\begin{enumerate}
\item Set $\mathcal{D}_{t+n} := \mathcal{D}_{t+n} \bigcup \{n\}, n=1,\dots,N$
\item Compute $d_{n,i_n} = \frac{1}{i_n} (y_{t} + (i_n-1) d_{n,i_n-1})$ and set $i_n := i_n +1$ for all $n \in \mathcal{D}_t$, and remove $\mathcal{D}_{t-1}$ from memory
\item When $ i \equiv 0 \textrm{ (mod } N)$,  compute $\textrm{sgn}(d_{n,i} - \hat{d}_{n,j})$ and $\hat{d}_{n,j+1} = \hat{d}_{n,j} + \alpha_j \textrm{sgn}(d_{n,i} - \hat{d}_{n,j})$ at time $\tau_i+n$ for  $n=1,\dots,N$. Send $\textrm{sgn}(d_{n,i} - \hat{d}_{n,j})$ at time $\tau_i+n$ to estimator, for  $n=1,\dots,N$
\end{enumerate}
\item At the estimator, when $ i \equiv 0 \textrm{ (mod } N)$:
\begin{enumerate}
\item Compute $\left[\hat{b}_{1,j}, \dots,  \hat{b}_{N,j}\right]^T = \mathbf{U}^{-1} \left[\hat{d}_{1,j}, \dots,  \hat{d}_{N,j}\right]^T$, where $ \mathbf{U}$ is defined by (\ref{U_defn})
\item Compute $\hat{d}_{n,j+1} = \hat{d}_{n,j} + \alpha_j \textrm{sgn}(d_{n,i} - \hat{d}_{n,j})$ when $\textrm{sgn}(d_{n,i} - \hat{d}_{n,j})$ arrives at estimator, for $n=1,\dots,N$ 
\end{enumerate}
\end{itemize}
\end{itemize}
}}
\end{table*}

\begin{theorem}
\label{smart_quantizer_convergence_thm}
Under Algorithm 3 and Assumptions $5-9$, $\hat{b}_{n,j}  \stackrel{\textrm{a.s.}}{\rightarrow} b_n$ as $j \rightarrow \infty$ for $n=1,\dots,N$. 
\end{theorem}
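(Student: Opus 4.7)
The plan is to decompose the argument into three stages matching the structure of Algorithm 3: (i) convergence of the on-quantizer empirical averages $d_{n,i}$ to the population conditional expectations $d_n$, (ii) tracking of $d_n$ by the sign-algorithm iterate $\hat d_{n,j}$, and (iii) inversion of a fixed linear system to recover $b_n$. Throughout I would assume mild moment conditions on $u$ and $w$ so that the sample averages below have finite mean.

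For stage (i), the sequence $\{\mathds{1}(u_t > c_u)\, y_{t+n}\}_t$ is stationary with finite mean under Assumption 6. Writing
\[
d_{n,i(T)} \;=\; \frac{\sum_{t=1}^T \mathds{1}(u_t > c_u)\, y_{t+n}}{\sum_{t=1}^T \mathds{1}(u_t > c_u)},
\]
with $i(T)$ the number of $\{u_t > c_u\}$ events by time $T$, the strong law of large numbers applied to the numerator and denominator (divided by $T$), together with $\mathbb{P}(u_t > c_u) > 0$, yields $d_{n,i} \stackrel{\textrm{a.s.}}{\rightarrow} \mathbb{E}[y_{t+n} \mid u_t > c_u]$ as $i\to\infty$. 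Expanding $y_{t+n} = \sum_{m=1}^N b_m u_{t+n-m} + w_{t+n}$ and using mutual independence of $\{u_t\}$ and $\{w_t\}$, every term not involving $u_t$ contributes its unconditional mean, leaving exactly the expression for $d_n$ in (\ref{dhat_convergence}).

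For stage (ii), I would prove the purely deterministic lemma: if $\epsilon_j \to 0$, $\alpha_j > 0$, $\alpha_j \to 0$, $\sum_j \alpha_j = \infty$, and $e_{j+1} = e_j - \alpha_j\, \textrm{sgn}(e_j - \epsilon_j)$, then $e_j \to 0$. Applied pathwise with $e_j = \hat d_{n,j} - d_n$ and $\epsilon_j = d_{n,i(j)} - d_n$ (where $i(j) = jN$), this delivers $\hat d_{n,j} \stackrel{\textrm{a.s.}}{\rightarrow} d_n$ from stage (i). The lemma is proved by a case split: given $\delta > 0$, choose $J$ such that $|\epsilon_j| < \delta/2$ and $\alpha_j < \delta/2$ for $j \ge J$; then $\{|e_j| \le \delta\}$ is forward-invariant for $j \ge J$, and whenever $|e_j| > \delta$ the sign term equals $\textrm{sgn}(e_j)$, so $|e_{j+1}| = |e_j| - \alpha_j$, and divergence of $\sum \alpha_j$ forces $|e_j|$ to enter $[0, \delta]$ in finitely many steps. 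Hence $\limsup_j |e_j| \le \delta$ for arbitrary $\delta > 0$, so $e_j \to 0$.

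For stage (iii), by Lemma \ref{invertible_matrix_lemma} applied to $\mathbf{U}/\mathbb{E}[u]$ with $a = \mathbb{E}[u \mid u > c_u]/\mathbb{E}[u]$ (or directly to the diagonal matrix $\mathbf{U}$ when $\mathbb{E}[u] = 0$), Assumption 9 guarantees that $\mathbf{U}$ is invertible. The true parameters $b = (b_1, \ldots, b_N)^T$ satisfy $\mathbf{U} b = d$ by the form of $d_n$ established in stage (i), so $\hat b_j = \mathbf{U}^{-1}\hat d_j \stackrel{\textrm{a.s.}}{\rightarrow} \mathbf{U}^{-1} d = b$ by continuity of matrix inversion. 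The main obstacle is stage (ii): the sign-algorithm update is driven by a random target $d_{n,i(j)}$ that itself converges, rather than by a fixed target corrupted by zero-mean noise, so standard Kushner--Yin type results do not apply directly; the pathwise argument sketched above circumvents this by relying only on the a.s.\ convergence of the target and the bounded increments of the sign update.
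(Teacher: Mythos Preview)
Your proposal is correct and follows essentially the same three-stage approach as the paper: ergodic-theorem convergence of $d_{n,i}$, a pathwise sign-algorithm tracking argument showing $\hat d_{n,j}\to d_n$ via the same $\epsilon/2$ case split and forward-invariance reasoning, and finally continuity together with invertibility of $\mathbf{U}$. The only cosmetic differences are that you package stage (ii) as a standalone deterministic lemma and write the stage-(i) limit as a ratio of sample averages, but the underlying arguments coincide with the paper's.
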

\begin{proof}
See Appendix \ref{smart_quantizer_convergence_thm_proof}.
\end{proof}

\subsection{Unknown Input Distribution}
\label{smart_quantizer_unknown_input_sec}
Solving the linear equations (\ref{linear_eqns}) requires knowledge of $\mathbb{E}[u]$ and $\mathbb{E}[u|u > c_u]$, which in turn requires knowledge of the distribution of $u$. When the input distribution is unknown (except for enough knowledge such that Assumption 9 can be satisfied),  $\mathbb{E}[u]$ and $\mathbb{E}[u|u > c_u]$ can be estimated  if we also allow for some computation at the input quantizer. 

To estimate $\mathbb{E}[u]$, the input quantizer first computes 
$$e_{1,t} \triangleq \frac{1}{t} \sum_{t'=1}^t u_{t'}$$
using the recursion 
$$e_{1,t} = \frac{1}{t}(u_t +(t-1) e_{1,t-1}).$$
By the strong law of large numbers, $e_{1,t} \stackrel{\textrm{a.s.}}{\rightarrow} \mathbb{E}[u]$ as $t \rightarrow \infty$. 
The estimator estimates $e_{1,t}$ using the recursion: 
$$\hat{e}_{1,j+1} = \hat{e}_{1,j} + \alpha_j \textrm{sgn}(e_{1,t} - \hat{e}_{1,j})$$
where the quantities $ \textrm{sgn}(e_{1,t} - \hat{e}_{1,j})$  are sent by the input quantizer (see below for how the index $j$ is updated). Again, $\{\hat{e}_{1,j}\}$ can be reconstructed at the input quantizer given knowledge of the initial condition $\hat{e}_{1,0}$. 

To estimate $\mathbb{E}[u|u>c_u]$, whenever $u_t > c_u$, the input quantizer first increments an index $k$ by one. Denote the times when $u_t > c_u$ by $t_1, t_2, \dots$, with $t_1 < t_2 < \dots$. The input quantizer then computes
$$ e_{2,k} \triangleq \frac{1}{k} \sum_{k'=1}^k u_{t_{k'}}$$ 
using the recursion
$$e_{2,k} = \frac{1}{k} (u_{t_k} +(k-1) e_{2,k-1}).$$
We have  $e_{2,k} \stackrel{\textrm{a.s.}}{\rightarrow} \mathbb{E}[u|u>c_u]$ as $k \rightarrow \infty$ by the strong law of large numbers. The estimator estimates $e_{2,k}$ using the recursion:
$$\hat{e}_{2,j+1} = \hat{e}_{2,j} + \alpha_j \textrm{sgn}(e_{2,k} - \hat{e}_{2,j})$$ 
where the quantities $ \textrm{sgn}(e_{2,k} - \hat{e}_{2,j})$ are sent by the input quantizer. 

Now in Algorithm 3, the input quantizer is already sending $\mathds{1}(u_t > c_u)$ to the estimator at every time slot. Thus we need to modify the division of the time slots to incorporate the sending of the additional information $\textrm{sgn}(e_{1,t}-\hat{e}_{1,j})$ and $\textrm{sgn}(e_{2,k}-\hat{e}_{2,j})$ . 
We propose the following: Instead of an iteration $j$ having a (minimum) length of $N$ time slots as in Algorithm 3, we will now consider iterations $j$ with a (minimum) length of $N+2$ time slots. During the first $N$ time slots, the input quantizer will send  $\mathds{1}(u_t > c_u)$ to the estimator, which are then forwarded to the output quantizer. As in Algorithm 3, an index $i$ is now incremented by one\footnote{The indices $i$ and $k$ are different, as in the updating of $k$ one checks if $u_t > c_u$ at \emph{every} time step, to obtain more accurate estimates.} every time  $u_t > c_u$ (during the first $N$ time slots), and the iteration index $j$ is incremented by one whenever $i$ is a multiple of $N$.  The remaining two time slots will be used to transmit the quantities $\textrm{sgn}(e_{1,t} - \hat{e}_{1,j})$ and $\textrm{sgn}(e_{2,k} - \hat{e}_{2,j})$.

The parameters $b_1,b_2,\dots,b_N$  are now estimated by solving for $\hat{b}_{1,j}, \hat{b}_{2,j}, \dots, \hat{b}_{N,j}$ the following set of linear equations:
\begin{equation}
\label{linear_eqns_unknown_input}
\mathbf{U}_j \left[\hat{b}_{1,j}, \dots,  \hat{b}_{N,j}\right]^T = \left[\hat{d}_{1,j}, \dots,  \hat{d}_{N,j}\right]^T
\end{equation}
where
\begin{equation}
\label{Uj_defn}
\mathbf{U}_j \triangleq  \left[\begin{array}{cccc} \hat{e}_{2,j} & \hat{e}_{1,j} & \dots & \hat{e}_{1,j} \\ \hat{e}_{1,j} & \hat{e}_{2,j} & \dots & \hat{e}_{1,j} \\
\vdots & \vdots & \ddots & \vdots \\ \hat{e}_{1,j} & \hat{e}_{1,j} & \dots & \hat{e}_{2,j}
 \end{array} \right].
\end{equation}

The formal statement of the identification scheme is given  as Algorithm 4.

\begin{table*}[t!]
\fbox{\parbox{\textwidth}{
\textbf{Algorithm 4}
\begin{itemize} 
\item Choose a $c_u$ satisfying Assumption 9, and a  sequence $\{\alpha_j\}$  satisfying $\alpha_j > 0$, $\sum_{j=0}^\infty \alpha_j = \infty$, and $\sum_{j=0}^\infty \alpha_j^2 < \infty$. 
\item Initialize $i=0, j=0, k=0$, $d_{n,0}=0, \hat{d}_{n,0}=0, i_n=1, n=1,\dots,N$,  $e_{1,0}=0, e_{2,0}=0$,  $\hat{e}_{1,0} = 0, \hat{e}_{2,0} = 0 $, $\mathcal{D}_t = \emptyset, \forall t$
\item For $t=1,2,\dots$, do:
\begin{itemize}
\item If $u_t > c_u$, set $t_k=t$, $k := k + 1$
\item If $t \textrm{ mod } (N+2) \in \{1,\dots,N\}$ and $u_t > c_u$, set $\tau_i=t$, $i  := i + 1$
\begin{itemize} \item[--] If $i \equiv 0 \textrm{ (mod } N)$, set $j  := j + 1$ \end{itemize}
\item At the input quantizer:
\begin{enumerate}
\item Compute $e_{1,t} = \frac{1}{t} (u_t + (t-1) e_{1,t-1})$ and  $e_{2,k} = \frac{1}{k} (u_{t_k} +(k-1) e_{2,k-1})$
\item Send $\mathds{1}(u_t >c_u)$ to estimator if $t \textrm{ mod } (N+2) \in \{1,\dots,N\}$, which passes it on to the output quantizer 
\item When $i \equiv 0 \textrm{ (mod } N)$, compute $\textrm{sgn}(e_{1,t} - \hat{e}_{1,j})$, $\textrm{sgn}(e_{2,k} - \hat{e}_{2,j})$, 
$\hat{e}_{1,j+1} = \hat{e}_{1,j} + \alpha_j \textrm{sgn}(e_{1,t} - \hat{e}_{1,j})$, and $\hat{e}_{2,j+1} = \hat{e}_{2,j} + \alpha_j \textrm{sgn}(e_{2,k} - \hat{e}_{2,j})$. Send $\textrm{sgn}(e_{1,t} - \hat{e}_{1,j})$ and $\textrm{sgn}(e_{2,k} - \hat{e}_{2,j})$ to estimator at times $\tau_i + N + 1$ and $\tau_i + N + 2$ respectively
\end{enumerate}
\item At the output quantizer, when $t \textrm{ mod } (N+2) \in \{1,\dots,N\}$ and $u_t > c_u$:
\begin{enumerate}
\item Set $\mathcal{D}_{t+n} := \mathcal{D}_{t+n} \bigcup \{n\}, n=1,\dots,N$
\item Compute $d_{n,i_n} = \frac{1}{i_n} (y_t + (i_n-1) d_{n,i_n-1})$ and set $i_n := i_n +1$ for all $n \in \mathcal{D}_t$, and remove $\mathcal{D}_{t-1}$ from memory
\item When  $i \equiv 0 \textrm{ (mod } N)$, compute $\textrm{sgn}(d_{n,i} - \hat{d}_{n,j})$ and $\hat{d}_{n,j+1} = \hat{d}_{n,i} + \alpha_j \textrm{sgn}(d_{n,i} - \hat{d}_{n,j})$ at time $\tau_i +n$, for  $ n=1,\dots,N$. Send $\textrm{sgn}(d_{n,i} - \hat{d}_{n,j})$ at time $\tau_i +n$ to estimator, for $n=1,\dots,N$ 
\end{enumerate}
\item At the estimator, when $i \equiv 0 \textrm{ (mod } N)$:
\begin{enumerate}
\item Compute 
$$ \left[\hat{b}_{1,j}, \dots,  \hat{b}_{N,j}\right]^T =  \left\{ \begin{array} {ccl} \mathbf{U}_j^{-1} \left[\hat{d}_{1,j}, \dots,  \hat{d}_{N,j}\right]^T & , & \textrm{if }  \mathbf{U}_j \textrm{ is invertible} \\ 0 & , & \textrm{otherwise} \end{array} \right.$$
where $\mathbf{U}_j$ is defined by (\ref{Uj_defn})
\item Compute $\hat{d}_{n,j+1} = \hat{d}_{n,i} + \alpha_j \textrm{sgn}(d_{n,i} - \hat{d}_{n,j}),  n=1,\dots,N$, $\hat{e}_{1,j+1} = \hat{e}_{1,j} + \alpha_j \textrm{sgn}(e_{1,t} - \hat{e}_{1,j})$, and $\hat{e}_{2,j+1} = \hat{e}_{2,j} + \alpha_j \textrm{sgn}(e_{2,k} - \hat{e}_{2,j})$, when the quantities $\textrm{sgn}(d_{n,i} - \hat{d}_{n,j})$, $\textrm{sgn}(e_{1,t} - \hat{e}_{1,j})$, $\textrm{sgn}(e_{2,k} - \hat{e}_{2,j})$ arrive at estimator
\end{enumerate}
\end{itemize}
\end{itemize}
}}
\end{table*}

\begin{theorem}
Under Algorithm 4 and Assumptions $5, 6, 7, 9$, $\hat{b}_{n,j}  \stackrel{\textrm{a.s.}}{\rightarrow} b_n$ as $j \rightarrow \infty$ for $n=1,\dots,N$. 
\end{theorem}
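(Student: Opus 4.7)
The plan is to reduce this theorem to Theorem~\ref{smart_quantizer_convergence_thm} by showing that, in the limit, the linear system solved in Algorithm~4 converges to the same linear system solved in Algorithm~3. Concretely, I would first establish convergence of the three tracking quantities $\hat{e}_{1,j}$, $\hat{e}_{2,j}$, and $\hat{d}_{n,j}$, then conclude by continuity of matrix inversion. Throughout, I would carry over notation from Appendix~\ref{smart_quantizer_convergence_thm_proof} without restating the common pieces.

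First I would show convergence of the quantities that are actually averaged at the quantizers. By Assumption~6, $\{u_t\}$ is i.i.d., so the strong law of large numbers gives $e_{1,t} \xrightarrow{\mathrm{a.s.}} \mathbb{E}[u]$. The subsequence $\{u_{t_k}\}$ consists of the i.i.d.\ draws of $u$ conditioned on the event $\{u > c_u\}$, and since (under Assumption~9) this event has positive probability, the strong law applied to this subsequence yields $e_{2,k} \xrightarrow{\mathrm{a.s.}} \mathbb{E}[u\mid u > c_u]$. Finally, exactly as in the proof of Theorem~\ref{smart_quantizer_convergence_thm}, the modified scheduling of time slots in Algorithm~4 only alters the rate at which $i$ and $j$ are incremented relative to the wall clock $t$, not the underlying i.i.d.\ structure, so $d_{n,i} \xrightarrow{\mathrm{a.s.}} d_n$ where $d_n$ is defined as in (\ref{d_convergence}). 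Crucially, Assumption~6 still guarantees $i \to \infty$ and $k \to \infty$ a.s.\ as $t \to \infty$, because the fraction of time slots allocated to sending $\mathds{1}(u_t>c_u)$ is $N/(N+2)$.

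Next I would transfer these a.s.\ limits to the estimator's tracked copies $\hat{e}_{1,j}, \hat{e}_{2,j}, \hat{d}_{n,j}$. Each of these satisfies a sign-algorithm recursion of the form $\hat{x}_{j+1} = \hat{x}_j + \alpha_j \operatorname{sgn}(x_{(\cdot)} - \hat{x}_j)$, where the driving signal $x_{(\cdot)}$ converges a.s.\ to a deterministic limit $x^\star$. This is precisely the situation handled inside the proof of Theorem~\ref{smart_quantizer_convergence_thm} for the $\hat{d}_{n,j}$ iterates; the same argument (decomposing $\operatorname{sgn}(x_{(\cdot)} - \hat{x}_j) = \operatorname{sgn}(x^\star - \hat{x}_j) + [\operatorname{sgn}(x_{(\cdot)}-\hat{x}_j)-\operatorname{sgn}(x^\star - \hat{x}_j)]$, noting the second term vanishes a.s.\ whenever $\hat{x}_j$ is bounded away from $x^\star$, and applying the standard stochastic approximation convergence criteria, e.g. Theorem~2.3.1 of \cite{Chen_SA}) applies verbatim to $\hat{e}_{1,j}$ and $\hat{e}_{2,j}$, and to $\hat{d}_{n,j}$ in the modified schedule. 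Hence
\[
\hat{e}_{1,j} \xrightarrow{\mathrm{a.s.}} \mathbb{E}[u], \quad \hat{e}_{2,j} \xrightarrow{\mathrm{a.s.}} \mathbb{E}[u\mid u>c_u], \quad \hat{d}_{n,j} \xrightarrow{\mathrm{a.s.}} d_n.
\]

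The final step is to invert the linear system. The a.s.\ limits just obtained imply $\mathbf{U}_j \xrightarrow{\mathrm{a.s.}} \mathbf{U}$, where $\mathbf{U}$ is the matrix from (\ref{U_defn}). Under Assumption~9, after factoring out $\mathbb{E}[u]$ (treating the case $\mathbb{E}[u]\neq 0$) the matrix $\mathbf{U}$ has the structure of $\mathbf{A}$ in (\ref{A_matrix}) with $a = \mathbb{E}[u\mid u>c_u]/\mathbb{E}[u]$, and Assumption~9 guarantees $a\neq 1$ and $a\neq 1-N$, so Lemma~\ref{invertible_matrix_lemma} gives invertibility of $\mathbf{U}$. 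The case $\mathbb{E}[u]=0$ can be handled directly: $\mathbf{U}$ is then diagonal with nonzero diagonal entries $\mathbb{E}[u\mid u>c_u]$, which is also nonzero by Assumption~9. Since $\det(\cdot)$ is continuous and $\det \mathbf{U}\neq 0$, there exists a (random) $J$ such that $\mathbf{U}_j$ is invertible for all $j\geq J$ a.s., and $\mathbf{U}_j^{-1} \xrightarrow{\mathrm{a.s.}} \mathbf{U}^{-1}$. Combined with $[\hat{d}_{1,j},\dots,\hat{d}_{N,j}]^T \xrightarrow{\mathrm{a.s.}} [d_1,\dots,d_N]^T$, this yields $[\hat{b}_{1,j},\dots,\hat{b}_{N,j}]^T \xrightarrow{\mathrm{a.s.}} \mathbf{U}^{-1}[d_1,\dots,d_N]^T = [b_1,\dots,b_N]^T$, where the last equality uses (\ref{linear_eqns})--(\ref{U_defn}) as established in the proof of Theorem~\ref{smart_quantizer_convergence_thm}.

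The main obstacle I anticipate is rigorously verifying that the sign-algorithm stochastic approximation still converges when its driving signal is itself a random, time-varying estimator rather than a stationary noisy observation of a fixed target. The key point is that once $x_{(\cdot)} \to x^\star$ a.s., the perturbation $\operatorname{sgn}(x_{(\cdot)}-\hat{x}_j) - \operatorname{sgn}(x^\star-\hat{x}_j)$ is eventually zero on the event $\{|x^\star - \hat{x}_j|>\varepsilon\}$; the boundedness of the sign function then lets one invoke the expanding-truncation machinery (or directly Theorem~2.3.1 of \cite{Chen_SA}) without requiring square-integrable noise assumptions. This is the only place the argument departs nontrivially from Algorithm~3, so I would provide full details there and refer to Appendix~\ref{smart_quantizer_convergence_thm_proof} for the rest.
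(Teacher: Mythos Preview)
Your high-level approach matches the paper's exactly: establish a.s.\ convergence of $\hat{d}_{n,j}$, $\hat{e}_{1,j}$, $\hat{e}_{2,j}$ by the same mechanism as in Theorem~\ref{smart_quantizer_convergence_thm}, then conclude via continuity of matrix inversion and the invertibility of $\mathbf{U}$ under Assumption~9.

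One correction worth noting: the sign-algorithm convergence in Appendix~\ref{smart_quantizer_convergence_thm_proof} is \emph{not} the decomposition-plus-stochastic-approximation-theorem argument you describe in parentheses. The paper gives a direct sample-path $\epsilon$--$\delta$ argument: once $|d_{n,i}-d_n|<\epsilon/2$ and $\alpha_j<\epsilon/2$, one shows the iterate $\hat{d}_{n,j}$ eventually enters the ball $\{|\hat{d}-d_n|<\epsilon\}$ (using $\sum_j\alpha_j=\infty$) and then never leaves it (using the two-case analysis (\ref{dhat_epsilon_property})). Your proposed decomposition route can be made rigorous, but is more delicate than you suggest: the perturbation $\operatorname{sgn}(x_{(\cdot)}-\hat{x}_j)-\operatorname{sgn}(x^\star-\hat{x}_j)$ does \emph{not} satisfy the noise hypotheses of Theorem~2.3.1 of \cite{Chen_SA} (it neither vanishes nor has $\sum\alpha_j\varepsilon_j$ convergent, precisely because it can equal $\pm 2$ infinitely often near $x^\star$), so invoking that theorem directly fails. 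The paper's elementary argument sidesteps this entirely, and since you say you would ``refer to Appendix~\ref{smart_quantizer_convergence_thm_proof} for the rest,'' you should simply adopt that argument verbatim rather than the decomposition.

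Your explicit treatment of the $\mathbb{E}[u]=0$ case and of eventual invertibility of $\mathbf{U}_j$ are small additions the paper leaves implicit; they are correct and welcome.
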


\begin{proof}
Using similar arguments as in the  proof of Theorem \ref{smart_quantizer_convergence_thm}, we can show that 
\begin{equation}
\label{dhat_ehat_convergence}
\begin{split}
\hat{d}_{n,j} &  \stackrel{\textrm{a.s.}}{\rightarrow}  b_1  \mathbb{E}[u] + \dots + b_{n-1} \mathbb{E}[u] +  b_{n}  \mathbb{E}[u|u > c_u] \\ & \qquad + b_{n+1}  \mathbb{E}[u] + \dots + b_N \mathbb{E}[u], \quad n=1,\dots,N, \\
\hat{e}_{1,j} & \stackrel{\textrm{a.s.}}{\rightarrow} \mathbb{E}[u], \textrm{ and }
\hat{e}_{2,j}  \stackrel{\textrm{a.s.}}{\rightarrow} \mathbb{E}[u|u>c_u].
\end{split}
\end{equation}

Hence by (\ref{dhat_ehat_convergence}) and continuity,
\begin{equation*}
\begin{split}
\left[\hat{b}_{1,j}, \dots,  \hat{b}_{N,j}\right]^T & = \mathbf{U}_j^{-1} \left[\hat{d}_{1,j}, \dots,  \hat{d}_{N,j}\right]^T  \\ & \stackrel{\textrm{a.s.}}{\rightarrow}  \mathbf{U}^{-1} \mathbf{U} \left[b_1, \dots, b_N \right]^T  =  \left[b_1, \dots, b_N \right]^T.
\end{split}
\end{equation*}
\end{proof}

\subsection{Simulation Results}
We first consider the same third order system as in Section \ref{dumb_quantizer_sim_sec}, where $b_1=0.2$, $b_2=-0.2$, $b_3=0.6$, and the inputs and noises are Gaussian with $\mu=1$, $\sigma_u^2=1$, $\sigma_w^2=1$. We use the identification schemes in Algorithms 3 and 4. 
In the schemes we use the sequences $\alpha_j = \frac{1}{j}$, and the threshold $c_u=1$. Figs. \ref{smart_quantizer_alg3_plot} and \ref{smart_quantizer_alg4_plot_gaussian} shows the estimates $\hat{b}_1, \hat{b}_2, \hat{b}_3$ from Algorithms 3 and 4 respectively. 
\begin{figure}[t!]
\centering 
\includegraphics[scale=0.5]{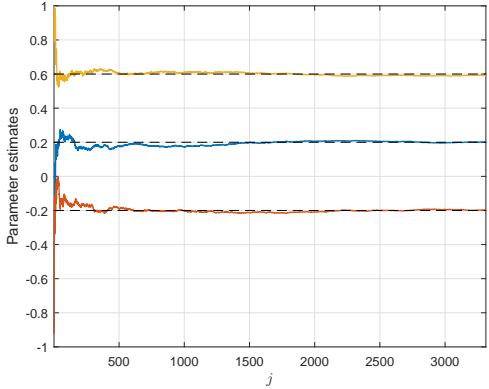} 
\caption{Parameter Estimates: Algorithm 3}
\label{smart_quantizer_alg3_plot}
\end{figure} 

\begin{figure}[t!]
\centering 
\includegraphics[scale=0.5]{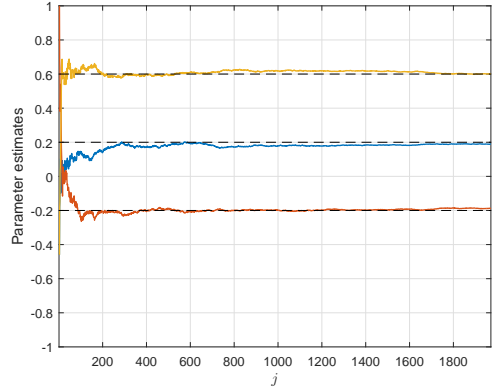} 
\caption{Parameter Estimates: Algorithm 4 with Gaussian inputs and noise}
\label{smart_quantizer_alg4_plot_gaussian}
\end{figure}

Next, we change $w_t$ to be uniformly distributed between $-\sqrt{3}$ and $\sqrt{3}$, and $u_t$ to be uniformly distributed between $0$ and $2\sqrt{3}$ (so that the variances are equal to 1). Fig. \ref{smart_quantizer_alg4_plot_uniform}  shows the estimates $\hat{b}_1, \hat{b}_2, \hat{b}_3$ from Algorithm 4. 
\begin{figure}[t!]
\centering 
\includegraphics[scale=0.5]{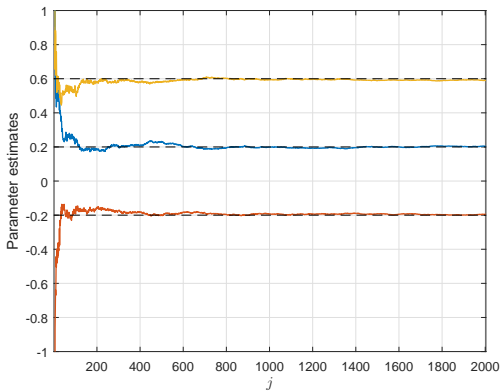} 
\caption{Parameter Estimates: Algorithm 4 with uniformly distributed inputs and noise}
\label{smart_quantizer_alg4_plot_uniform}
\end{figure}

Approximations of the variances $t^{1/2} (\hat{b}_{n,j}-b_n), n=1,2,3$ using Monte Carlo approximations over 10000 simulation runs are plotted for the Gaussian distributed inputs and noise case with Algorithms 3 and 4  in Figs. \ref{smart_quantizer_alg3_convergence_plot} and \ref{smart_quantizer_alg4_convergence_plot_gaussian} respectively, and for uniformly distributed inputs and noise in Fig. \ref{smart_quantizer_alg4_convergence_plot_uniform}. Comparing Figs. \ref{smart_quantizer_alg3_convergence_plot} and \ref{smart_quantizer_alg4_convergence_plot_gaussian} with Figs. \ref{dumb_quantizer_alg1_convergence_plot} and \ref{dumb_quantizer_alg2_convergence_plot}, we see that the normalized variances are much smaller, and hence convergence of the algorithms is better, when the quantizers have some computational and storage capabilities. We do emphasize however that the algorithms are based on different principles, so it is not a straightforward comparison. 
\begin{figure}[t!]
\centering 
\includegraphics[scale=0.5]{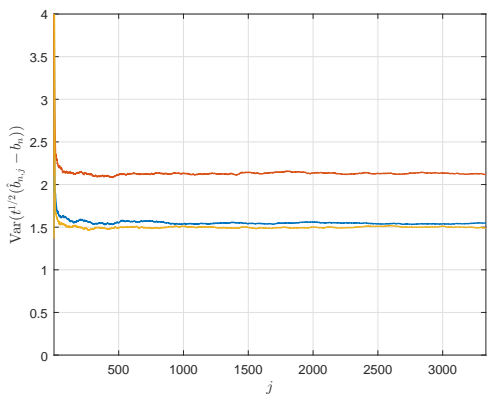} 
\caption{Convergence Behaviour: Algorithm 3 with Gaussian inputs and noise}
\label{smart_quantizer_alg3_convergence_plot}
\end{figure} 

\begin{figure}[t!]
\centering 
\includegraphics[scale=0.5]{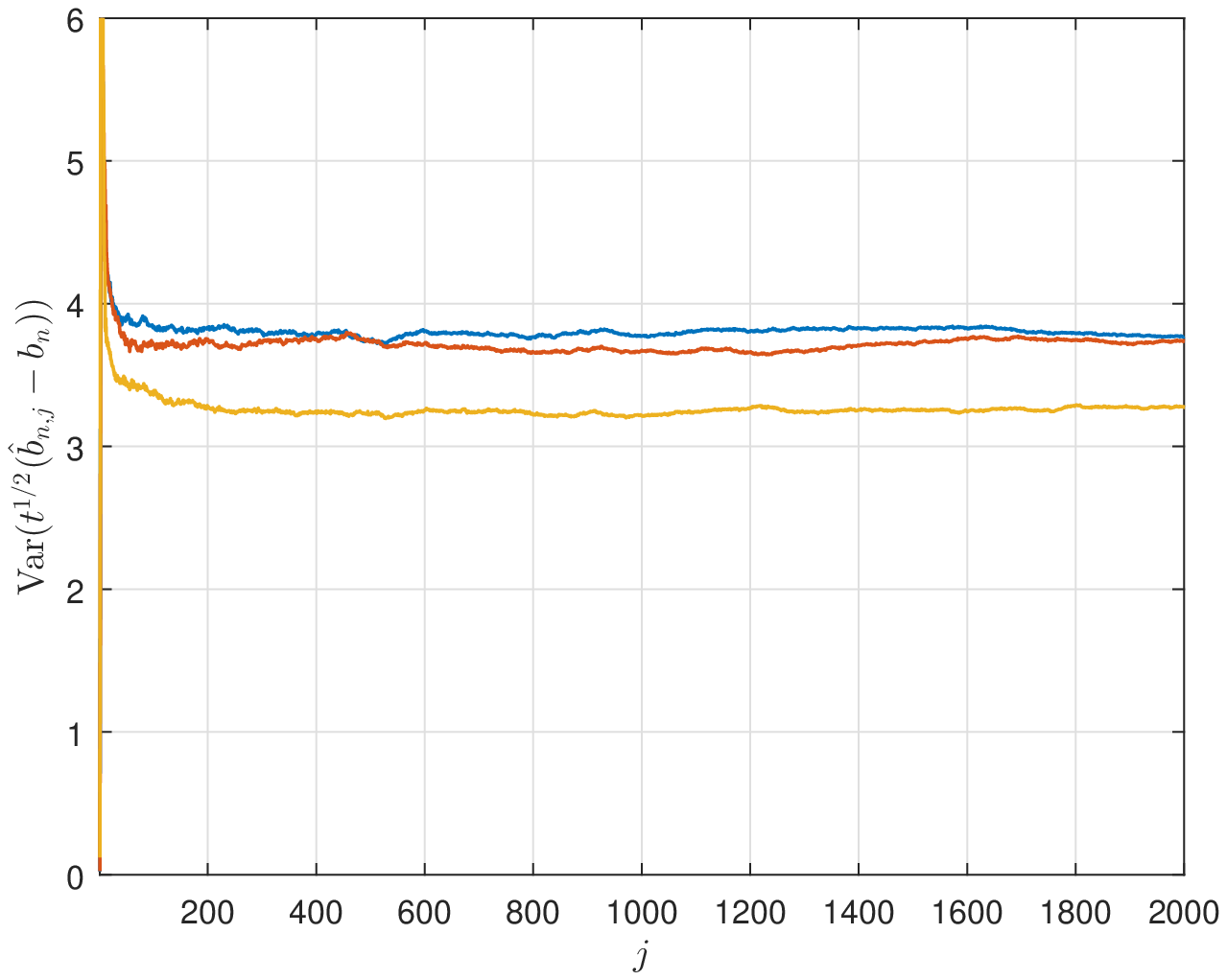} 
\caption{Convergence Behaviour: Algorithm 4 with Gaussian inputs and noise}
\label{smart_quantizer_alg4_convergence_plot_gaussian}
\end{figure} 

\begin{figure}[t!]
\centering 
\includegraphics[scale=0.5]{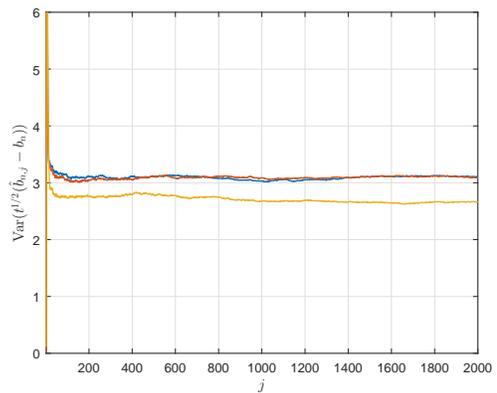} 
\caption{Convergence Behaviour: Algorithm 4 with uniformly distributed inputs and noise}
\label{smart_quantizer_alg4_convergence_plot_uniform}
\end{figure}

\begin{table*}[t!]
\caption{Summary of identification algorithms}
\centering
\begin{tabular}{|c|c|c|c|c|} \hline
 Algorithm & Computational Capability & Input Signal & Output Noise &  Input Parameter   \\
 			& of Quantizer & &  & Knowledge  \\ \hline \hline
  Algorithm 1 & None & i.i.d. Gaussian & i.i.d. zero-mean Gaussian   & $(\mu,\sigma_u^2)$  \\ \hline
  Algorithm 2 & None & i.i.d. Gaussian & i.i.d. zero-mean Gaussian  & None \\ \hline
Algorithm 3 & At output quantizer & i.i.d. & i.i.d. zero-mean   & $\mathbb{E}[u]$ \& $\mathbb{E}[u|u>c_u]$  \\ \hline
Algorithm 4 & At input \& output quantizers & i.i.d. & i.i.d. zero-mean    & None  \\ \hline
\end{tabular}
\label{summary_table}
\end{table*}

\section{Conclusion}
This paper has considered the identification of FIR systems with binary input and output observations. For the case where the quantizer thresholds can be adapted but the quantizers have no computational capabilities, we proposed identification schemes which are strongly consistent for Gaussian distributed inputs and noises. For the case of smart quantizers which have some computational and storage capabilities, strongly consistent identification schemes are proposed which can handle arbitrary input and noise distributions.  A summary of the main features and assumptions required for the different algorithms is provided in Table \ref{summary_table}.  Numerical simulations have illustrated the performance of the algorithms.  Rigorous analyses of the convergence rates of the algorithms is currently under investigation.  

\begin{appendix}

\subsection{Proof of Lemma \ref{F_increasing_lemma}}
\label{F_increasing_lemma_proof}
First note that 
\begin{align*}
& \frac{\partial}{\partial b}\Bigg[1 - \Phi \Bigg( \frac{- b(u - \mu)}{\sqrt{\widehat{\textrm{V}y} - b^2 \sigma_u^2}}\Bigg) \Bigg] =  \frac{1}{\sqrt{2\pi}}\exp\left( - \frac{b^2 (u-\mu)^2}{2(\widehat{\textrm{V}y} - b^2 \sigma_u^2)} \right) \\ & \quad \times \left[ \frac{b^2 \sigma_u^2 (u-\mu)}{ (\widehat{\textrm{V}y}-b^2 \sigma_u^2)^{3/2}} + \frac{u-\mu}{ (\widehat{\textrm{V}y}-b^2 \sigma_u^2)^{1/2} } \right]
 \end{align*}
is continuous for $b \in \Big(  - \sqrt{\widehat{\textrm{V}y}/\sigma_u^2},  \sqrt{\widehat{\textrm{V}y}/\sigma_u^2}\Big)$,  since quotients and compositions of continuous functions are continuous. For fixed $\widehat{\textrm{V}y}$, regard $F(b, \widehat{\textrm{V}y})$ as a function of $b$. Then by the Leibniz rule, we have that 
\begin{equation*}
\begin{split}
\frac{dF}{db} 
& = \int_{\mu}^\infty \frac{\partial}{\partial b}\Bigg[1 - \Phi \Bigg( \frac{- b(u - \mu)}{\sqrt{\widehat{\textrm{V}y} - b^2 \sigma_u^2}}\Bigg) \Bigg] p(u) du \\
& = \int_{\mu}^\infty \frac{1}{\sqrt{2\pi}}\exp\left( - \frac{b^2 (u-\mu)^2}{2(\widehat{\textrm{V}y} - b^2 \sigma_u^2)} \right) \\ & \quad \times \left[ \frac{b^2 \sigma_u^2 (u-\mu)}{ (\widehat{\textrm{V}y}-b^2 \sigma_u^2)^{3/2}} + \frac{u-\mu}{ (\widehat{\textrm{V}y}-b^2 \sigma_u^2)^{1/2} } \right] p(u) du \\
& > 0, \quad \forall b \in \left(  - \sqrt{\widehat{\textrm{V}y}/\sigma_u^2},  \sqrt{\widehat{\textrm{V}y}/\sigma_u^2}\right),
\end{split}
\end{equation*}
since each term in the integrand is strictly positive for $b \in \Big(  - \sqrt{\widehat{\textrm{V}y}/\sigma_u^2},  \sqrt{\widehat{\textrm{V}y}/\sigma_u^2}\Big)$ and $u > \mu$. 

\subsection{Statement of Theorem  2.4.1(ii) of \cite{Chen_SA}}
\label{chen_thm_statement}
We provide here the statement of Theorem 2.4.1(ii) of \cite{Chen_SA}, adapted to the notation of this paper. A major part of the proof of Theorem \ref{dumb_quantizer_convergence_thm} is the verification of the conditions of this theorem. 
\begin{theorem}[Theorem 2.4.1(ii) of \cite{Chen_SA}]
Consider the procedure
\begin{equation*}
x_{j+1} = \Pi_{M_{\varsigma(j)}} (x_j + \alpha_j (f(x_j) + \varepsilon_j))
\end{equation*}
where
\begin{equation*}
\varsigma(0) = 0, \quad \varsigma(j) \triangleq \sum_{i=1}^{j-1} \mathds{1}(||x_i + \alpha_i (f(x_i) + \varepsilon_i)|| > M_{\varsigma(i)}), 
\end{equation*}
and the truncation operation
\begin{equation*}
\Pi_M(x) \triangleq \left\{\begin{array}{ll} x, & ||x|| \leq M \\ x^*, &||x|| > M. \end{array}  \right.
\end{equation*}

Suppose $f(.): \mathbb{R}^l \rightarrow \mathbb{R}$ has a unique root $x^0$, and $f(.)$ is continuous at $x^0$. Further assume that conditions A2.2.1 and A2.2.2 below hold.\\
A2.2.1:   $\alpha_j > 0$, $\alpha_j \rightarrow 0$, and $\sum_{j=1}^\infty \alpha_j = \infty$.
 \\
 A2.2.2: There exists a continuously differentiable function $v(.): \mathbb{R}^l \rightarrow \mathbb{R}$ such that  
\begin{equation}
\label{lyapunov_condition}
\sup_{\delta \leq ||x-x^0|| \leq \Delta}  f^T(x)\nabla v(x) < 0
\end{equation}
for any $\Delta > \delta > 0$, and 
\begin{equation}
\label{lyapunov_condition2}
v(x^*) < \inf_{||x|| = c_0} v(x)
\end{equation}
 for some $c_0 > 0$ and $||x^*|| < c_0$.

Then $\{x_j\}$ converges to $x^0$ for those sample paths where $\varepsilon_j$ can be written as $\varepsilon_j = \varepsilon_j^{(1)} + \varepsilon_j^{(2)}$, with
$$\sum_{j=1}^\infty \alpha_j \varepsilon_j^{(1)} < \infty \textrm{ and } \varepsilon_j^{(2)} \rightarrow 0. $$
\end{theorem}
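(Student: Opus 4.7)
The strategy is to reduce the claim, via continuity of matrix inversion, to three parallel stochastic‑approximation convergences on the quantities $\hat d_{n,j}$, $\hat e_{1,j}$, $\hat e_{2,j}$. Concretely, once I have shown
\begin{equation*}
\hat d_{n,j}\xrightarrow{\mathrm{a.s.}}d_n,\quad
\hat e_{1,j}\xrightarrow{\mathrm{a.s.}}\mathbb{E}[u],\quad
\hat e_{2,j}\xrightarrow{\mathrm{a.s.}}\mathbb{E}[u\mid u>c_u],
\end{equation*}
with $d_n=b_n\mathbb{E}[u\mid u>c_u]+\sum_{m\neq n}b_m\mathbb{E}[u]$, then $\mathbf U_j\to\mathbf U$ almost surely, where $\mathbf U$ is defined in (\ref{U_defn}). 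Assumption 9 together with Lemma \ref{invertible_matrix_lemma} guarantees that $\mathbf U$ is invertible, so for all sufficiently large $j$ the matrix $\mathbf U_j$ is invertible as well and $\mathbf U_j^{-1}\to\mathbf U^{-1}$. The finitely many iterations on which Algorithm~4 falls back to $\hat{\mathbf b}_j=0$ do not affect the tail, and continuity gives
\begin{equation*}
\hat{\mathbf b}_j=\mathbf U_j^{-1}\hat{\mathbf d}_j\xrightarrow{\mathrm{a.s.}}\mathbf U^{-1}(\mathbf U\mathbf b)=\mathbf b.
\end{equation*}

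For the three preliminary limits I would first handle the ``inner'' empirical averages. By Assumption~6 and the strong law of large numbers, $e_{1,t}\to\mathbb{E}[u]$. The sub‑sampled sequence $\{u_{t_k}\}$ is i.i.d.\ with the conditional law of $u$ given $u>c_u$ (the random sampling times are stopping times driven by the i.i.d.\ process), so the SLLN again yields $e_{2,k}\to\mathbb{E}[u\mid u>c_u]$. For $d_{n,i_n}$, the triple $(u_{\tau_{i'}},\dots,u_{\tau_{i'}+N-1},w_{\tau_{i'}+n})$ indexed by $i'$ is an i.i.d.\ sequence (by independence of $\{u_t\}$ and $\{w_t\}$ and the fact that $\tau_{i'}<\tau_{i'+1}-N$ eventually fails only on a set whose contribution is absorbed by a standard ergodic‑theorem argument), and taking the conditional expectation produces the claimed limit $d_n$.

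The main obstacle is the ``outer'' sign recursion: each of $\hat d_{n,j}$, $\hat e_{1,j}$, $\hat e_{2,j}$ has the common form
\begin{equation*}
\hat x_{j+1}=\hat x_j+\alpha_j\,\mathrm{sgn}(\xi_j-\hat x_j),\qquad \xi_j\xrightarrow{\mathrm{a.s.}}\xi^{\star},
\end{equation*}
and I must show $\hat x_j\to\xi^{\star}$ even though the ``target'' $\xi_j$ is itself random and time‑varying. I would write $\mathrm{sgn}(\xi_j-\hat x_j)=\mathrm{sgn}(\xi^{\star}-\hat x_j)+\varepsilon_j$, observing that the correction $\varepsilon_j$ is nonzero only when $\xi_j$ and $\xi^{\star}$ lie on opposite sides of $\hat x_j$, which forces $|\xi^{\star}-\hat x_j|\le|\xi_j-\xi^{\star}|$. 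Hence $\varepsilon_j$ vanishes on any subsequence along which $\hat x_j$ stays a fixed distance away from $\xi^{\star}$. This is exactly the setting of Theorem 2.4.1(ii) of \cite{Chen_SA} with $f(x)=\mathrm{sgn}(\xi^{\star}-x)$, unique root $x^0=\xi^{\star}$, Lyapunov function $v(x)=(x-\xi^{\star})^2$ (for which $f(x)v'(x)=-2|x-\xi^{\star}|<0$ away from $\xi^{\star}$, verifying (\ref{lyapunov_condition})--(\ref{lyapunov_condition2})), and vanishing perturbation $\varepsilon_j^{(2)}=\varepsilon_j\to 0$; boundedness of the iterates is supplied by the expanding‑truncation operator (the same device used in Algorithms 1 and~2, which I would assume is implicit here, or can be inserted without affecting the algorithm's behaviour). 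Alternatively, an elementary sandwich argument works: if $\hat x_j>\xi^{\star}+\delta$, then eventually $\xi_j<\hat x_j$ so the increment equals $-\alpha_j$; because $\sum\alpha_j=\infty$ and $\alpha_j\to0$, the iterate must enter and remain within any neighbourhood of $\xi^{\star}$. Applying this to each of the three recursions completes the reduction, and the matrix‑inversion argument of the first paragraph finishes the proof.
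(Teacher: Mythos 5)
Your proposal does not address the statement you were asked to prove. The statement is Theorem 2.4.1(ii) of \cite{Chen_SA}: a general convergence result for a stochastic approximation procedure with expanding truncations, under a unique-root hypothesis, the step-size condition A2.2.1, the Lyapunov condition A2.2.2, and a pathwise decomposition of the noise into a summable part $\varepsilon_j^{(1)}$ and a vanishing part $\varepsilon_j^{(2)}$. The paper does not prove this theorem at all; it is quoted verbatim from \cite{Chen_SA} in Appendix \ref{chen_thm_statement} precisely so that its hypotheses can be \emph{verified} in the proofs of the paper's own consistency theorems. What you have written is instead a proof sketch of the final theorem of Section \ref{smart_quantizer_unknown_input_sec} (consistency of Algorithm 4): the reduction via continuity of matrix inversion from $\mathbf{U}_j \to \mathbf{U}$ and Lemma \ref{invertible_matrix_lemma}, the strong-law arguments for $e_{1,t}$, $e_{2,k}$ and $d_{n,i}$, and the sandwich argument for the sign recursion $\hat x_{j+1} = \hat x_j + \alpha_j\,\mathrm{sgn}(\xi_j - \hat x_j)$ — the last of which closely mirrors the paper's own $\epsilon/2$ argument in the proof of Theorem \ref{smart_quantizer_convergence_thm}. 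That material is largely sound for \emph{that} theorem, but it is attached to the wrong statement.

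The mismatch is not merely cosmetic: your argument explicitly \emph{invokes} Theorem 2.4.1(ii) of \cite{Chen_SA} as a black box (``This is exactly the setting of Theorem 2.4.1(ii) \dots''), so read as a proof of that theorem it is circular. A genuine proof of the stated result would have to engage with the mechanisms that your proposal never touches: showing that under condition A2.2.2 (inequalities (\ref{lyapunov_condition}) and (\ref{lyapunov_condition2})) the expanding truncations occur only finitely often on the relevant sample paths, so that the iterates are eventually bounded and the recursion reduces to an untruncated one; and then a Lyapunov/ODE-type argument showing that, because $\sum_j \alpha_j \varepsilon_j^{(1)} < \infty$ and $\varepsilon_j^{(2)} \to 0$, the function $v(x_j)$ must eventually decrease whenever $x_j$ lies in any annulus $\delta \le \|x - x^0\| \le \Delta$, forcing convergence to the unique root $x^0$. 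None of this appears in your proposal, so as a proof of the given statement it is a non-starter, even though most of its content would be creditable against the Algorithm 4 theorem.
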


\subsection{Proof of Theorem \ref{dumb_quantizer_convergence_thm}}
\label{dumb_quantizer_convergence_thm_proof}
The idea of the proof is that we will first show that Algorithm 1 can be viewed as a multi-dimensional stochastic approximation algorithm with expanding truncations. We will then verify the conditions of Theorem 2.4.1(ii) of \cite{Chen_SA} (given in Appendix \ref{chen_thm_statement}) to conclude that  $\hat{b}_{n,j}  \stackrel{\textrm{a.s.}}{\rightarrow} b_n$ as $j \rightarrow \infty$ for $n=1,\dots,N$. 

Define $f(.): \mathbb{R}^{N+2} \rightarrow \mathbb{R}^{N+2}$ by
\begin{equation*}
f \left(\! \left[\begin{array}{c} c_y \\ \tilde{c}_y \\ \hat{b}_1 \\ \vdots \\ \hat{b}_N \end{array} \right] \!\right) \triangleq \left[\!\!\begin{array}{c} \mathbb{P} (y_t > c_y) - 0.5 \\  \mathbb{P} (y_t > \tilde{c}_y) - 0.1587 \\ F(b_1,(\tilde{c}_y - c_y)^2) - F(\hat{b}_1,(\tilde{c}_y - c_y)^2) \\ \vdots \\ F(b_N,(\tilde{c}_y - c_y)^2) - F(\hat{b}_N,(\tilde{c}_y - c_y)^2) \end{array}  \!\!\right].
\end{equation*}
where 
$$\mathbb{P}(y_t > c) = 1 - \Phi \bigg( \frac{c -(b_1 + \dots +  b_N) \mu}{\sqrt{(b_1^2 + \dots + b_N^2) \sigma_u^2 + \sigma_w^2}}\bigg)$$
is the stationary probability that $y_t > c$, and $F(.,.)$ is given by (\ref{F_fn}). 

Since $y_t$ is Gaussian, the unique solution to $\mathbb{P} (y_t > c_y) - 0.5 =0$ is clearly $c_y = \mathbb{E}[y]$, and the unique  solution to 
$\mathbb{P} (y_t > \tilde{c}_y) - 0.1587 =0 $ is clearly $\tilde{c}_y = \mathbb{E}[y] + \sqrt{\textrm{Var}[y]}$. For $c_y = \mathbb{E}[y]$ and $\tilde{c}_y = \mathbb{E}[y] + \sqrt{\textrm{Var}[y]}$, each of the equations
$$F(b_n,(\tilde{c}_y - c_y)^2) - F(\hat{b}_n,(\tilde{c}_y - c_y)^2) = 0 $$ has the unique solution $\hat{b}_n = b_n$ (i.e. the true value of the parameter) by Lemma \ref{F_increasing_lemma}.  Hence the equation 
\begin{equation}
\label{root_eqn}
f([c_y, \tilde{c}_y, \hat{b}_1, \dots, \hat{b}_N ]^T) = \mathbf{0}
\end{equation}
 has the unique root 
\begin{equation}
\label{x0}
[c_y, \tilde{c}_y, \hat{b}_1, \dots, \hat{b}_N ]^T \!=\! [ \mathbb{E}[y],  \mathbb{E}[y] + \sqrt{\textrm{Var}[y]}, b_1, \dots, b_N]^T \! \triangleq \! x^0.
\end{equation}

Next, let us write the recursions in Algorithm 1 in the following form:
\begin{equation}
\label{alg1_SA_form}
\begin{split}
& \left[\!\!\!\begin{array}{c} c_{y,j+1} \\ \tilde{c}_{y,j+1} \\ \hat{b}_{1,j+1} \\ \vdots \\  \hat{b}_{N,j+1} \end{array}\!\!\! \right]   = \Pi_{M_{\varsigma(j)}} \left( \left[\!\!\!\begin{array}{c} c_{y,j} \\ \tilde{c}_{y,j} \\ \hat{b}_{1,j} \\ \vdots \\ \hat{b}_{N,j} \end{array} \!\!\!\right] 
 +  \alpha_j f \left(\! \left[\!\!\begin{array}{c} c_{y,j} \\ \tilde{c}_{y,j} \\ \hat{b}_{1,j} \\ \vdots \\ \hat{b}_{N,j} \end{array} \!\!\right] \!\right)   + \alpha_j \varepsilon_j \right)
\end{split}
\end{equation}
where
\begin{equation}
\label{dumb_quantizer_martingale_noise}
\begin{split}
&\varepsilon_j \triangleq \\ & \left[\!\!\begin{array}{c} \mathds{1} (y_{2j-1} > c_{y,j}) - \mathbb{P}(y_{t} > c_{y,j}) \\  \mathds{1} (y_{2j} > \tilde{c}_{y,j}) - \mathbb{P} (y_{t} > \tilde{c}_{y,j}) \\ \mathds{1}(u_{2j-2} > \mu) \mathds{1}(y_{2j-1} > c_{y,j}) -  F(b_1,(\tilde{c}_{y,j} - c_{y,j})^2)  \\ \vdots \\ \mathds{1}(u_{2j-1-N} \!>\! \mu) \mathds{1}(y_{2j-1} \!>\! c_{y,j}) -  F(b_N,(\tilde{c}_{y,j} \!-\! c_{y,j})^2)  \end{array}  \!\!\right],
\end{split}
\end{equation}
which is in the form of a multi-dimensional stochastic approximation algorithm, that tries to find the roots of the equation (\ref{root_eqn}),
with the ``noise'' term being $\varepsilon_j$.

We will now verify the conditions of Theorem 2.4.1(ii) of \cite{Chen_SA}, whose statement is also given in Appendix \ref{chen_thm_statement}.
We first need $f(.)$ to have a unique root $x^0$, with $f(.)$ continuous at $x^0$. Uniqueness of $x^0$ given by (\ref{x0}) has been shown at the beginning of the proof, and $f(.)$ is clearly continuous at $x^0$. 

Note that Condition A2.2.1 is true by assumption. 
We next  verify Condition A2.2.2. 
Choose $v(x) \triangleq ||x-x^0||^2$, i.e.
$$v\left([c_y, \tilde{c}_y,  \hat{b}_1, \dots, \hat{b}_N]^T \right) = \left\| [ c_y, \tilde{c}_y,  \hat{b}_1, \dots, \hat{b}_N ]^T - x_0 \right\|^2.$$
Then 
\begin{equation*}
\begin{split}
& f^T\left([c_y, \tilde{c}_y,  \hat{b}_1, \dots, \hat{b}_N]^T \right)  \nabla v_n\left([c_y, \tilde{c}_y,  \hat{b}_1, \dots, \hat{b}_N]^T \right) \\ & = 2 \left[\begin{array}{c} \mathbb{P} (y_t > c_y) - 0.5 \\  \mathbb{P} (y_t > \tilde{c}_y) - 0.1587 \\ F(b_1,(\tilde{c}_y - c_y)^2) - F(\hat{b}_1,(\tilde{c}_y - c_y)^2) \\ \vdots \\ F(b_N,(\tilde{c}_y - c_y)^2) - F(\hat{b}_N,(\tilde{c}_y - c_y)^2)\end{array}  \right]^T \\ & \qquad \qquad \times 
\left( [ c_y, \tilde{c}_y,  \hat{b}_1, \dots, \hat{b}_N ]^T - x_0 \right) \\
& < 0, \,\, \forall [ c_y, \tilde{c}_y,  \hat{b}_1, \dots, \hat{b}_N ]^T \neq x^0.
\end{split}
\end{equation*}
The inequality above holds since  $\mathbb{P} (y_t > c_y) - 0.5$ and  $ \mathbb{P} (y_t > \tilde{c}_y) - 0.1587$ are strictly decreasing in $c_y$ and $\tilde{c}_y$ respectively,
$$F(b_n,(\tilde{c}_y - c_y)^2) - F(\hat{b}_n,(\tilde{c}_y - c_y)^2)$$   for fixed $c_y$ and $\tilde{c}_y$ is  decreasing  in $\hat{b}_n, n=1,\dots,N$  by Lemma \ref{F_increasing_lemma}, and for $c_y = \mathbb{E}[y]$ and $ \tilde{c}_{y}  = \mathbb{E}[y] +\sqrt{\textrm{Var}[y]}$ only takes on the value 0 when $\hat{b}_n = b_n$. This verifies (\ref{lyapunov_condition}). Also, by the reverse triangle inequality, we have
$$v(x) = ||x-x^0||^2 \geq \Big| ||x|| - ||x^0|| \Big|^2$$
so that for some $c_0 > 2 ||x^0||$, one has 
$$ \inf_{||x|| = c_0} v(x) > (2  ||x^0|| -  ||x^0||)^2 = ||x^0||^2 = v(\mathbf{0}). $$ 
By our choice of $x^* = \mathbf{0}$, this verifies (\ref{lyapunov_condition2}) and hence condition A2.2.2. 

Finally, we want to show that $\varepsilon_j$ can be written as $\varepsilon_j = \varepsilon_j^{(1)} + \varepsilon_j^{(2)}$, with
$$\sum_{j=1}^\infty \alpha_j \varepsilon_j^{(1)} < \infty \textrm{ a.s. and } \varepsilon_j^{(2)} \rightarrow 0 \textrm{ a.s.}, $$
which will then imply the a.s. convergence of $[c_{y,j}, \tilde{c}_{y,j}, \hat{b}_{1,j},\dots,\hat{b}_{N,j}]^T$ to $x^0$. 
Rewrite (\ref{dumb_quantizer_martingale_noise}) as
\begin{equation*}
\begin{split}
&\varepsilon_j  = \\ &\left[\begin{array}{l} \mathds{1} (y_{2j-1} > c_{y,j}) - \mathbb{P}(y_{2j-1} > c_{y,j}) \\  \mathds{1} (y_{2j} > \tilde{c}_{y,j}) - \mathbb{P} (y_{2j} > \tilde{c}_{y,j}) \\ \mathds{1}(u_{2j-2} > \mu) \mathds{1}(y_{2j-1} > c_{y,j}) \\  \qquad -  \mathbb{P}(u_{2j-2} > \mu, y_{2j-1} > c_{y,j}) \\ \qquad \qquad \vdots \\ \mathds{1}(u_{2j-1-N} > \mu) \mathds{1}(y_{2j-1} > c_{y,j}) \\  \qquad-  \mathbb{P}(u_{2j-1-N} > \mu, y_{2j-1} > c_{y,j})\end{array}  \right] +\\
&  
 \left[\!\! \begin{array}{l} \mathbb{P}(y_{2j-1} > c_{y,j}) - \mathbb{P}(y_{t} > c_{y,j})  \\   \mathbb{P} (y_{2j} > \tilde{c}_{y,j}) -  \mathbb{P} (y_{t} > \tilde{c}_{y,j}) \\ 
\mathbb{P}(u_{2j-2} > \mu, y_{2j-1} > c_{y,j}) - \mathbb{P}(u_{t-1} > \mu, y_{t} > c_{y,j}) 
\\ \qquad + \mathbb{P}(u_{t-1} > \mu, y_{t} > c_{y,j}) - F(b_1,(\tilde{c}_{y,j}-c_{y,j})^2) \\ \qquad\qquad \vdots \\ 
\mathbb{P}(u_{2j-1-N} > \mu, y_{2j-1} > c_{y,j}) - \mathbb{P}(u_{t-N} > \mu, y_{t} > c_{y,j}) \\  \qquad + \mathbb{P}(u_{t-N} > \mu, y_{t} > c_{y,j}) -  F(b_N,(\tilde{c}_{y,j}-c_{y,j})^2) \end{array} \!\! \right] \\ 
& \triangleq \left[\varepsilon_j^{(1,1)},  \varepsilon_j^{(2,1)}, \varepsilon_j^{(3,1)}, \dots, \varepsilon_j^{(N+2,1)}  \right]^T \\ & \quad + \left[ \varepsilon_j^{(1,2)},   \varepsilon_j^{(2,2)},  \varepsilon_j^{(3,2)}, \dots, \varepsilon_j^{(N+2,2)}  \right]^T.
\end{split}
\end{equation*}
We will prove that $\sum_{j=1}^\infty \alpha_j \varepsilon_j^{(i,1)} < \infty \textrm{ a.s.}$  and  $\varepsilon_j^{(i,2)}  \stackrel{\textrm{a.s.}}{\rightarrow} 0 $ for $i=1,2,\dots,N+2$. 

In order to show that $\sum_{j=1}^\infty \alpha_j \varepsilon_j^{(i,1)} < \infty \textrm{ a.s.}$, we will show that each $\{\varepsilon_j^{(i,1)} \}$ is a martingale difference sequence, which will then imply that $\sum_{j=1}^\infty \alpha_j \varepsilon_j^{(i,1)} < \infty \textrm{ a.s.}$, by e.g. Theorem B.6.1 of \cite{Chen_SA}. 
Define the  $\sigma$-algebras 
\begin{equation}
\label{sigma_algebra}
\begin{split}
&\mathcal{F}_j \triangleq  \sigma\big(\{\mathds{1}(y_{2i-1} > c_{y,i}), \mathds{1}(y_{2i} > \tilde{c}_{y,i}), \mathds{1}(u_{2i-2} > \mu), \\ &\qquad\qquad i=1,\dots,j  \}\big). 
\end{split}
\end{equation}
From the recursion for $c_{y,j}$, we note that $c_{y,j}$ is measurable with respect to $\mathcal{F}_j$ (and in fact is also measurable with respect to $\mathcal{F}_{j-1}$), and so $\varepsilon_j^{(1,1)}$ is measurable with respect to $\mathcal{F}_j$. 
We have 
\begin{equation*}
\begin{split}
\mathbb{E}[\varepsilon_j^{(1,1)}|\mathcal{F}_{j-1}] & = \mathbb{E} [ \mathds{1}(y_{2j-1} > c_{y,j}) - \mathbb{P}(y_{2j-1} > c_{y,j})|\mathcal{F}_{j-1}] \\ &=0.
\end{split}
\end{equation*}
Thus $\{\varepsilon_j^{(1,1)}\}$ is a martingale difference sequence.  Similar arguments can be used to show that $\{\varepsilon_j^{(i,1)}\}$ for  $i=2,\dots,N+2$ are martingale difference sequences, and therefore that $\sum_{j=1}^\infty \alpha_j \varepsilon_j^{(i,1)} < \infty \textrm{ a.s.}$ 

Let us now show that $\varepsilon_j^{(1,2)}  = \mathbb{P}(y_{2j-1} > c_{y,j}) - \mathbb{P}(y_{t} > c_{y,j})  \stackrel{\textrm{a.s.}}{\rightarrow} 0 $. 
First, we note that $y_{2j-1} $ and $c_{y,j-L}$ are independent for sufficiently large $L$, e.g. $L=N$, so that we can write
\begin{equation}
\label{y_c_independence}
 \mathbb{P}(y_{2j-1} > c_{y,j-L}) =  \mathbb{P}(y_{t} > c_{y,j-L}).
 \end{equation}
Next, we note that $c_{y,j+1}$ differs from $c_{y,j}$ by $\alpha_j/2$ (either above or below). Since by assumption $\alpha_j > 0, \forall j$, we can bound the difference between $c_{y,j}$ and $c_{y,j-L}$ as follows:
\begin{equation}
\label{cyN_bound}
 c_{y,j-L} - \sum_{i=j-L}^{j-1} \frac{\alpha_i}{2} \leq c_{y,j} \leq  c_{y,j-L} + \sum_{i=j-L}^{j-1} \frac{\alpha_i}{2},
 \end{equation}
 where $\sum_{i=j-L}^{j-1} \frac{\alpha_i}{2} $ is a deterministic quantity. 
Then we have
\begin{align*} & \mathbb{P}\bigg(y_{2j-1} > c_{y,j-L} + \sum_{i=j-L}^{j-1} \frac{\alpha_i}{2}\bigg) \leq \mathbb{P}(y_{2j-1} > c_{y,j}) \\ & \qquad \leq \mathbb{P}\bigg(y_{2j-1} > c_{y,j-L} - \sum_{i=j-L}^{j-1} \frac{\alpha_i}{2}\bigg), 
\end{align*}
or by (\ref{y_c_independence}) that 
\begin{align*} &  \mathbb{P}\bigg(y_{t} > c_{y,j-L} + \sum_{i=j-L}^{j-1} \frac{\alpha_i}{2}\bigg) \leq \mathbb{P}(y_{2j-1} > c_{y,j}) \\& \qquad\leq \mathbb{P}\bigg(y_{t} > c_{y,j-L} - \sum_{i=j-L}^{j-1} \frac{\alpha_i}{2}\bigg). 
\end{align*}
From (\ref{cyN_bound}) we also have 
\begin{align*}  & \mathbb{P}\bigg(y_{t} > c_{y,j-L} + \sum_{i=j-L}^{j-1} \frac{\alpha_i}{2}\bigg) \leq \mathbb{P}(y_{t} > c_{y,j}) \\ & \qquad \leq \mathbb{P}\bigg(y_{t} > c_{y,j-L} - \sum_{i=j-L}^{j-1} \frac{\alpha_i}{2}\bigg). 
\end{align*}
Thus
\begin{equation*}
\begin{split}
& \mathbb{P}\bigg(y_{t} > c_{y,j-L} +\! \!\!\sum_{i=j-L}^{j-1} \!\! \frac{\alpha_i}{2}\bigg) - \mathbb{P}\bigg(y_{t} > c_{y,j-L} - \!\!\!\sum_{i=j-L}^{j-1} \!\! \frac{\alpha_i}{2}\bigg) \\ & \leq \mathbb{P}(y_{2j-1} > c_{y,j}) -  \mathbb{P}(y_{t} > c_{y,j}) \\ &  \leq \mathbb{P}\bigg(y_{t} > c_{y,j-L} -\!\!\! \sum_{i=j-L}^{j-1} \!\!\frac{\alpha_i}{2}\bigg) - \mathbb{P}\bigg(y_{t} > c_{y,j-L} +\!\!\! \sum_{i=j-L}^{j-1} \!\!\frac{\alpha_i}{2}\bigg).
\end{split}
\end{equation*}
Since $\alpha_j \rightarrow 0$ as $j\rightarrow \infty$, we  also have $\sum_{i=j-L}^{j-1} \frac{\alpha_i}{2}  \rightarrow 0$ as $j\rightarrow \infty$. As $y_t$ is Gaussian, we then have 
$$\mathbb{P}\bigg(y_{t} > c_{y,j-L} + \!\!\sum_{i=j-L}^{j-1} \!\!\frac{\alpha_i}{2}\bigg) - \mathbb{P}\bigg(y_{t} > c_{y,j-L} - \!\!\sum_{i=j-L}^{j-1} \!\!\frac{\alpha_i}{2}\bigg)  \stackrel{\textrm{a.s.}}{\rightarrow} 0, $$ 
$$\mathbb{P}\bigg(y_{t} > c_{y,j-L} - \!\!\sum_{i=j-L}^{j-1} \!\!\frac{\alpha_i}{2}\bigg) - \mathbb{P}\bigg(y_{t} > c_{y,j-L} + \!\! \sum_{i=j-L}^{j-1} \!\!\frac{\alpha_i}{2}\bigg)  \stackrel{\textrm{a.s.}}{\rightarrow} 0, $$ 
and hence 
$$\mathbb{P}(y_{2j-1} > c_{y,j}) -  \mathbb{P}(y_{t} > c_{y,j})  \stackrel{\textrm{a.s.}}{\rightarrow} 0. $$ 
 By applying Theorem 2.4.1 of \cite{Chen_SA} to the recursions for $c_{y,j}$, we can then conclude that $c_{y,j}  \stackrel{\textrm{a.s.}}{\rightarrow} \mathbb{E}[y]$. 
 A  similar argument can be used to show  $\varepsilon_j^{(2,2)}   \stackrel{\textrm{a.s.}}{\rightarrow} 0 $, and hence that $\tilde{c}_{y,j}  \stackrel{\textrm{a.s.}}{\rightarrow} \mathbb{E}[y] +\sqrt{\textrm{Var}[y]}$. 
Moreover, it also follows that for $n=1,\dots,N$,
$$\mathbb{P}(u_{2j-1-n} > \mu, y_{2j-1} > c_{y,j}) -  \mathbb{P}(u_{t-n} > \mu, y_{t} > c_{y,j}) \stackrel{\textrm{a.s.}}{\rightarrow} 0 $$ by a similar argument. Next, the a.s. convergence to $0$ of
\begin{equation*}
\begin{split}
 & \mathbb{P}(u_{t-n} > \mu, y_{t} > c_{y,j}) - F(b_n,(\tilde{c}_{y,j}-c_{y,j})^2) \\ & = \int_{\mu}^\infty \bigg[1 - \Phi  \bigg( \frac{c_{y,j} - b_n (u - \mu) - \mathbb{E}[y]}{\sqrt{\textrm{Var}[y] - b_{n}^2 \sigma_u^2}}\bigg) \bigg] p(u) du \\ & \quad - F(b_n,(\tilde{c}_{y,j}-c_{y,j})^2)
\end{split}
\end{equation*}
 follows from the almost sure convergence of $c_{y,j}$ and $\tilde{c}_{y,j}$, and continuity.  Hence for $n=1,\dots,N$,
\begin{equation*}
\begin{split}
&\varepsilon_j^{(n+2,2)}  = \\ & \mathbb{P}(u_{2j-1-n} > \mu, y_{2j-1} > c_{y,j}) -  \mathbb{P}(u_{t-n} > \mu, y_{t} > c_{y,j}) \\ & \quad +  \mathbb{P}(u_{t-n} > \mu, y_{t} > c_{y,j})  - F(b_n,(\tilde{c}_{y,j}-c_{y,j})^2)\\  &  \stackrel{\textrm{a.s.}}{\rightarrow} 0.
\end{split}
\end{equation*}
By  Theorem 2.4.1 of \cite{Chen_SA} again, we then conclude the almost sure convergence of $[c_{y,j}, \tilde{c}_{y,j}, \hat{b}_{1,j},\dots,\hat{b}_{N,j}]^T$ to $[\mathbb{E}[y], \mathbb{E}[y] + \sqrt{\textrm{Var}[y]}, b_1, \dots,b_N]^T$ as $j\rightarrow \infty$, and in particular the almost sure convergence of $\hat{b}_{n,j}$ to the true value $b_n$, for $n=1,\dots,N$.

\subsection{Proof of Theorem \ref{dumb_quantizer_convergence_thm2}}
\label{dumb_quantizer_convergence_thm2_proof}
Note that in Algorithm 2, the update for $\hat{b}_{n,j+1}$ involves ``delayed'' information $c_{u,\bar{j}(n,j)}$ rather than $c_{u,j}$. We will first consider the convergence for a non-delayed version of Algorithm 2, and then  describe how delays can be handled. 

We first look at the recursions (\ref{alg2_recursions}), but with $c_{u,\bar{j}(n,j)}$ replaced by $c_{u,j}$ for $n=1,\dots,N$. 
Define $f(.): \mathbb{R}^{N+4} \rightarrow \mathbb{R}^{N+4}$ by
\begin{equation*}
f \!\left(\! \left[\!\!\begin{array}{c}  c_y \\ \tilde{c}_y  \\ c_u \\ \tilde{c}_u \\ \hat{b}_1 \\ \vdots \\ \hat{b}_N \end{array} \!\! \right]\! \right) \triangleq \left[\!\!\begin{array}{c} \mathbb{P} (y_t > c_y) - 0.5 \\ \mathbb{P} (y_t > \tilde{c}_y) - 0.1587  \\  \mathbb{P} (u_t > c_u) -0.5 \\  \mathbb{P} (u_t > \tilde{c}_u) - 0.1587 \\  h(c_y, \tilde{c}_y, c_u, \tilde{c}_u,b_1) - h(c_y, \tilde{c}_y, c_u, \tilde{c}_u,\hat{b}_1) \\ \vdots 
\\ h(c_y, \tilde{c}_y, c_u, \tilde{c}_u,b_N) \!-\! h(c_y, \tilde{c}_y, c_u, \tilde{c}_u,\hat{b}_N)\end{array} \!\! \right].
\end{equation*}
where $h(\cdot,\cdot,\cdot,\cdot,\cdot)$ is given by (\ref{h_defn}). By similar arguments as in the proof of Theorem \ref{dumb_quantizer_convergence_thm}, we can show that the equation
\begin{equation*}
f([c_y, \tilde{c}_y, c_u, \tilde{c}_u,\hat{b}_1, \dots, \hat{b}_N ]^T) = \mathbf{0}
\end{equation*}
 has the unique root 
\begin{align*}
&[c_y, \tilde{c}_y, c_u, \tilde{c}_u, \hat{b}_1, \dots, \hat{b}_N ]^T \\&= [ \mathbb{E}[y],  \mathbb{E}[y] \!+\! \sqrt{\textrm{Var}[y]},  \mathbb{E}[u],  \mathbb{E}[u] \!+\! \sqrt{\textrm{Var}[u]}, b_1, \dots, b_N]^T.
\end{align*}

We then write the recursions  in the following form:
\begin{equation*}
\begin{split}
& \left[ \!\!\!\begin{array}{c} c_{y,j+1} \\ \tilde{c}_{y,j+1} \\ c_{u,j+1} \\ \tilde{c}_{u,j+1} \\ \hat{b}_{1,j+1} \\ \vdots \\ \hat{b}_{N,j+1}  \end{array} \!\!\! \right]  = \Pi_{M_{\varsigma(j)}}\left( \left[\!\!\! \begin{array}{c} c_{y,j} \\ \tilde{c}_{y,j} \\  c_{u,j} \\ \tilde{c}_{u,j} \\ \hat{b}_{1,j} \\ \vdots \\ \hat{b}_{N,j} \end{array} \!\!\! \right] + \alpha_j  f \!\left(\! \left[\!\!\begin{array}{c}  c_{y,j} \\ \tilde{c}_{y,j}  \\ c_{u,j} \\ \tilde{c}_{u,j} \\ \hat{b}_{1,j} \\ \vdots \\ \hat{b}_{N,j} \end{array} \!\! \right]\! \right) + \alpha_j \varepsilon_j \right)
\end{split}
\end{equation*}
where
\begin{align*}
&\varepsilon_j \!=\! \left[\!\! \begin{array}{l} \mathds{1} (y_{2j-1} > c_{y,j}) -\mathbb{P} (y_t > c_{y,j}) \\ \mathds{1} (y_{2j} > \tilde{c}_{y,j}) - \mathbb{P} (y_t > \tilde{c}_{y,j})  \\ \mathds{1} (u_{2(j-1)+[j]_2} > c_{u,j}) - \mathbb{P} (u_t > c_{u,j}) \\  \mathds{1} (u_{2j+[j]_2} > \tilde{c}_{u,j}) - \mathbb{P} (u_t > \tilde{c}_{u,j}) \\  g(1,j) \big[\mathds{1} (u_{2j-2} > c_{u,j}) \mathds{1} (y_{2j-1} > c_{y,j}) \\ \qquad - h(c_{y,j}, \tilde{c}_{y,j},c_{u,j}, \tilde{c}_{u,j}, \hat{b}_{1,j}) \big] \\ \qquad -  h(c_{y,j}, \tilde{c}_{y,j}, c_{u,j}, \tilde{c}_{u,j},b_1) \\ \qquad + h(c_{y,j}, \tilde{c}_{y,j}, c_{u,j}, \tilde{c}_{u,j},\hat{b}_{1,j}) \\ \qquad \qquad \vdots 
\\ g(N,j) \big[\mathds{1} (u_{2j-1-N} \!>\! c_{u,j}) \mathds{1} (y_{2j-1} \!>\! c_{y,j}) 
\\ \qquad - h(c_{y,j}, \tilde{c}_{y,j},c_{u,j}, \tilde{c}_{u,j}, \hat{b}_{N,j}) \big] \\ \qquad - h(c_{y,j}, \tilde{c}_{y,j}, c_{u,j}, \tilde{c}_{u,j},b_N) \\ \qquad + h(c_{y,j}, \tilde{c}_{y,j}, c_{u,j}, \tilde{c}_{u,j},\hat{b}_{N,j}) \end{array} \!\! \right] 
\!\triangleq \!
 \left[\!\!\!\begin{array}{c} \varepsilon_j^1 \\ \varepsilon_j^2 \\ \varepsilon_j^3 \\ \varepsilon_j^4 \\ \varepsilon_j^5 \\ \vdots \\ \varepsilon_j^{N+4}  \end{array}\! \!\!\right]
\end{align*}
The first four components of $\varepsilon_j$ can be rewritten as
\begin{equation*}
\begin{split}
& \left[\!\!\begin{array}{c} \varepsilon_j^1 \\ \varepsilon_j^2 \\ \varepsilon_j^3 \\ \varepsilon_j^4 \end{array}\!\! \right]   = \left[\!\!\begin{array}{c} \mathds{1} (y_{2j-1} > c_{y,j}) -\mathbb{P} (y_{2j-1} > c_{y,j}) \\ \mathds{1} (y_{2j} > \tilde{c}_{y,j}) - \mathbb{P} (y_{2j} > \tilde{c}_{y,j})  \\ \mathds{1} (u_{2(j-1)+[j]_2} > c_{u,j}) - \mathbb{P} (u_{2(j-1)+[j]_2} > c_{u,j}) \\  \mathds{1} (u_{2j+[j]_2} > \tilde{c}_{u,j}) - \mathbb{P} (u_{2j+[j]_2} > \tilde{c}_{u,j}) \end{array} \!\! \right] 
\\ & \qquad \qquad +  \left[\!\!\begin{array}{c} \mathbb{P} (y_{2j-1} > c_{y,j}) - \mathbb{P} (y_t > c_{y,j})  
\\ \mathbb{P} (y_{2j} > \tilde{c}_{y,j}) -  \mathbb{P} (y_t > \tilde{c}_{y,j}) 
\\  \mathbb{P} (u_{2(j-1)+[j]_2} > c_{u,j})  -  \mathbb{P} (u_t > c_{u,j}) 
\\  \mathbb{P} (u_{2j+[j]_2} > \tilde{c}_{u,j}) -  \mathbb{P} (u_t > \tilde{c}_{u,j}) 
\end{array} \!\!  \right]
\\ & \triangleq \left[\varepsilon_j^{(1,1)},  \varepsilon_j^{(2,1)}, \varepsilon_j^{(3,1)}, \varepsilon_j^{(4,1)} \right]^T \!\!+\! \left[ \varepsilon_j^{(1,2)},   \varepsilon_j^{(2,2)},  \varepsilon_j^{(3,2)}, \varepsilon_j^{(4,2)} \right]^T\!\!.
\end{split}
\end{equation*}
By similar arguments as in the proof of Theorem \ref{dumb_quantizer_convergence_thm}, we can show that  $\sum_{j=1}^\infty \alpha_j \varepsilon_j^{(i,1)} < \infty \textrm{ a.s.}$  and  $\varepsilon_j^{(i,2)}  \stackrel{\textrm{a.s.}}{\rightarrow} 0 $ for $i=1,2,3,4$, and hence the almost sure convergence of $[c_{y,j}, \tilde{c}_{y,j}, c_{u,j}, \tilde{c}_{u,j}]^T$ to $[\mathbb{E}[y], \mathbb{E}[y] + \sqrt{\textrm{Var}[y]},\mathbb{E}[u], \mathbb{E}[u] + \sqrt{\textrm{Var}[u]}]^T$ as $j\rightarrow \infty$.

For the convergence of $\hat{b}_{n,j}, n=1,\dots,N$, note that if $g(n,j) = 1$, then $g(n,j+1)=0$, $g(n,j+2)=1$, $g(n,j+3)=0$ etc., so that $\hat{b}_{n,j}$ updates at every second $j$. When $g(n,j) = 1$, we have 
\begin{align*}
&\varepsilon_j^{n+4} \\& = \mathds{1} (u_{2j\!-\!1\!-\!n} \!>\! c_{u,j}) \mathds{1} (y_{2j\!-\!1}\! >\! c_{y,j}) 
 \!-\! h(c_{y,j}, \tilde{c}_{y,j}, c_{u,j}, \tilde{c}_{u,j},b_n)  \\
& = \big[\mathds{1} (u_{2j-1-n} > c_{u,j}) \mathds{1} (y_{2j-1} > c_{y,j}) \\ & \quad - \mathbb{P} (u_{2j-1-n} > c_{u,j}, y_{2j-1} > c_{y,j}) \big] \\
& \, + \big[\mathbb{P} (u_{2j-1-n} \!>\! c_{u,j}, y_{2j-1} \!>\! c_{y,j})  \!-\! \mathbb{P} (u_{t-n} \!>\! c_{u,j}, y_{t} \!>\! c_{y,j}) \\
& \qquad  + \mathbb{P} (u_{t-n} \!>\! c_{u,j}, y_{t} \!>\! c_{y,j})  \!-\! h(c_{y,j}, \tilde{c}_{y,j}, c_{u,j}, \tilde{c}_{u,j},b_n) \big]
\\ & \triangleq \varepsilon_j^{(n+4,1)} + \varepsilon_j^{(n+4,2)}
\end{align*}
For a given $n$, let $j_0(n)$ be the smallest positive integer such that $g(n,j_0(n)) = 1$. Using similar arguments as in the proof of Theorem \ref{dumb_quantizer_convergence_thm}, we can show that  $\sum_{j'=0}^\infty \alpha_{j_0(n)+2j'}^{\phantom{(n+4,1)}} \varepsilon_{j_0(n)+2j'}^{(n+4,1)} < \infty \textrm{ a.s.}$  and  $\varepsilon_{j_0(n)+2j'}^{(n+4,2)}  \stackrel{\textrm{a.s.}}{\rightarrow} 0 $ as $j' \rightarrow \infty$, for $n=1,2,\dots,N$. This then implies that $\hat{b}_{n,j_0(n)+2j'} \rightarrow b_n$ as $j' \rightarrow \infty$. As $\hat{b}_{n,j_0(n)+2j'+1} = \hat{b}_{n,j_0(n)+2j'}$, we also have $\hat{b}_{n,j_0(n)+2j'+1} \rightarrow b_n$ as $j' \rightarrow \infty$, and hence $\hat{b}_{n,j} \rightarrow b_n$ as $j \rightarrow \infty$. 

The above shows convergence for the recursion (\ref{alg2_recursions}), but with $c_{u,\bar{j}(n,j)}$ replaced by $c_{u,j}$ for $n=1,\dots,N$.  For the original updates (\ref{alg2_recursions}) in Algorithm 2 which uses the delayed information $c_{u,\bar{j}(n,j)}$, the situation can be considered as a case of the asynchronous stochastic approximation procedure of \cite[Sec. 5.6]{Chen_SA}.\footnote{The asynchronous stochastic approximation algorithm of \cite{Chen_SA} also allows for different step sizes $\alpha_j$ for each component, and different truncation times for different components.} Almost sure convergence of the procedure is shown by verifying conditions A5.6.1-A5.6.5 of \cite{Chen_SA}. Conditions  A5.6.1-A5.6.4 are similar to the conditions of Theorem 2.4.1 of \cite{Chen_SA}, and can be verified using similar arguments to the above, together with our assumption that the same $\alpha_j$ is used for all components. The additional condition is A5.6.5, which in our notation says that
\begin{equation*}
\lim_{j \rightarrow \infty} \sum_{i = \bar{j}(j,n)}^{j} \alpha_i \stackrel{\textrm{a.s.}}{=} 0, \quad n=1,\dots,N.
\end{equation*}
But this condition is true since $j - \bar{j}(j,n) = \lfloor \frac{n}{2} \rfloor$ is bounded and $\alpha_j \rightarrow 0$ as $j \rightarrow \infty$. 

\subsection{Proof of Theorem \ref{smart_quantizer_convergence_thm}}
\label{smart_quantizer_convergence_thm_proof}
As previously noted in (\ref{d_convergence}), we have that 
$d_{n,i}   \stackrel{\textrm{a.s.}}{\rightarrow} d_n$
for $n=1,\dots,N$. We will first show that 
\begin{equation}
\label{dhat_convergence2}
\begin{split}
\hat{d}_{n,j} \stackrel{\textrm{a.s.}}{\rightarrow}  d_n &= b_1  \mathbb{E}[u] + \dots + b_{n-1}  \mathbb{E}[u] + b_n \mathbb{E}[u | u > c] \\ & \quad+ b_{n+1}  \mathbb{E}[u] + \dots + b_N \mathbb{E}[u]
\end{split}
\end{equation}
for $n=1,\dots,N$, where $\hat{d}_{n,j}$ satisfies the recursion 
(\ref{sign_algorithm})
with $i=Nj$. 

Fix an arbitrary $n \in \{1,\dots,N\}$. 
Since $d_{n,i}   \stackrel{\textrm{a.s.}}{\rightarrow} d_n$, consider a sample path $\omega$ where $d_{n,i}   \rightarrow d_n$. 
We will show that one also has $\hat{d}_{n,j} \rightarrow d_n$ for this $\omega$. Let $\epsilon > 0$ be given. Since $d_{n,i}   \rightarrow d_n$, there exists an $i^* (\omega)$ dependent on $\omega$ such that 
\begin{equation}
\label{d_epsilon_property}
 | d_{n,i}   - d_n | < \frac{\epsilon}{2}, \quad \forall i \geq i^* (\omega).
\end{equation} 

Referring back to the recursion (\ref{sign_algorithm}), note that the iterate  $\hat{d}_{n,j+1} $ will either increase or decrease by $\alpha_j$ from the previous iterate $\hat{d}_{n,j} $, depending on whether $\hat{d}_{n,j} $ was below or above $d_{n,i}$ respectively. 
Let $j_0(i^*(\omega))$ be sufficiently large such that $j\geq j_0$ implies   $\alpha_j < \frac{\epsilon}{2}$ and $ i \geq i^* (\omega)$.  
We want to show that there exists a $j_1(i^*(\omega),\omega) \geq j_0$ such that $ | \hat{d}_{n,j_1}   - d_n | < \epsilon$. 
If $| \hat{d}_{n,j_0}   - d_n | < \epsilon$, then by setting $j_1= j_0$  we are done. If instead $| \hat{d}_{n,j_0}   - d_n | > \epsilon$, then such a $j_1$ exists since $ \sum_{j=j_0}^\infty \alpha_j = \infty$ (which follows from the assumption that $\sum_{j=0}^\infty \alpha_j = \infty$) and $\alpha_j \rightarrow 0$.

We next want to show that 
\begin{equation}
\label{dhat_epsilon_property}
| \hat{d}_{n,j_1}   - d_n | < \epsilon \Rightarrow | \hat{d}_{n,j_1+1}   - d_n | < \epsilon,
\end{equation}
 which by induction then implies  
 \begin{equation}
\label{dhat_epsilon_property2} 
| \hat{d}_{n,j}   - d_n | < \epsilon, \forall j \geq j_1.
\end{equation} 
There are two cases to consider: i) If  $| \hat{d}_{n,j_1}   - d_n | < \frac{\epsilon}{2}$, then 
$| \hat{d}_{n,j_1+1}   - d_n | < \epsilon$ since $\alpha_j < \frac{\epsilon}{2}$. ii) If $| \hat{d}_{n,j_1}   - d_n | > \frac{\epsilon}{2}$ and  $| \hat{d}_{n,j_1}   - d_n | < \epsilon$, then $\hat{d}_{n,j_1+1}$ will decrease by $\alpha_j $ if $\hat{d}_{n,j_1}   - d_n  > \frac{\epsilon}{2}$ (since $\hat{d}_{n,j_1} > d_{n,i}$ by (\ref{d_epsilon_property})), and increase by $\alpha_j $ if $\hat{d}_{n,j_1}   - d_n  < - \frac{\epsilon}{2}$. Either way, we have $| \hat{d}_{n,j_1+1}   - d_n | < \epsilon$.
Thus (\ref{dhat_epsilon_property2}) is satisfied, which means  that  $\hat{d}_{n,j} \rightarrow d_n$ for this $\omega$.
Therefore
$\mathbb{P} (\{\omega: \hat{d}_{n,j} \rightarrow d_n\} ) \geq \mathbb{P} (\{\omega: d_{n,i} \rightarrow d_n\} ) = 1$, and we have  $\hat{d}_{n,j} \stackrel{\textrm{a.s.}}{\rightarrow}  d_n$. Since $n$ was arbitrary, we thus have  $\hat{d}_{n,j} \stackrel{\textrm{a.s.}}{\rightarrow}  d_n$ for $n=1,\dots,N$. 
 
 To complete the proof, almost sure convergence of $[\hat{b}_{1,j},\dots,\hat{b}_{N,j}]$ to $[b_1,\dots,b_N]$ follows from (\ref{dhat_convergence2}) and continuity, since
\begin{equation*}
\begin{split}
\left[\hat{b}_{1,j}, \dots,  \hat{b}_{N,j}\right]^T & = \mathbf{U}^{-1} \left[\hat{d}_{1,j}, \dots,  \hat{d}_{N,j}\right]^T \\ & \stackrel{\textrm{a.s.}}{\rightarrow}  \mathbf{U}^{-1} \mathbf{U} \left[b_1, \dots, b_N \right]^T  =  \left[b_1, \dots, b_N \right]^T.
\end{split}
\end{equation*}

\end{appendix}

\bibliography{IEEEabrv,sysid}
\bibliographystyle{IEEEtran}

\end{document}